\newcommand{\bmat}{\begin{bmatrix}}
\newcommand{\emat}{\end{bmatrix}}
\newcommand{\bsmat}{\left[ \begin{smallmatrix}}
\newcommand{\esmat}{\end{smallmatrix} \right]}
\newcommand{\R}{\mathbb{R}}
\newcommand{\C}{\mathbb{C}}
\newcommand{\T}{\mathrm{T}}
\renewcommand{\Re}{\operatorname{Re}}
\renewcommand{\Im}{\operatorname{Im}}
\newcommand{\RHi}{\mathcal{RH}_\infty}
\newcommand{\RLi}{\mathcal{RL}_\infty}
\renewcommand{\L}{\mathcal{L}}
\newcommand{\ophi}{\overline{\phi}}
\newcommand{\uphi}{\underline{\phi}}
\newtheorem{lemma}{Lemma}
\newtheorem{remark}{Remark}
\newtheorem{theorem}{Theorem}
\newtheorem{problem}{Problem}
\newtheorem{definition}{Definition}
\DeclareMathOperator{\diag}{diag}
\DeclareMathOperator{\blkdiag}{blkdiag}
\begin{document}
\title{Phase Robustness Analysis for Structured Perturbations in MIMO LTI Systems}
\author{Luke Woolcock and Robert Schmid
\thanks{Luke Woolcock and Robert Schmid are with the Department  of Electrical  and  Electronic Engineering, University of Melbourne (e-mail: lpwoolcock@student.unimelb.edu.au; rschmid@unimelb.edu.au). }}

\maketitle

\begin{abstract}
The stability of interconnected linear time-invariant systems using  singular values and the small  gain theorem has been studied for many decades. The  methods of $\mu$-analysis and   synthesis has been  extensively developed to provide robustness guarantees for a   plant subject to structured perturbations,  with  components  in the structured perturbation  satisfying a   bound  on their  largest singular  value.
Recent results on phase-based stability measures have led  to a counterpart of the small gain theorem, known as the small phase theorem. To  date these phase-based methods  have  only  been used to  provide   stability  robustness measures for  unstructured perturbations. 

In this paper, we define a phase robustness metric for multivariable linear time-invariant systems in the presence of a structured perturbation. We demonstrate its relationship to a certain class of multiplier functions for integral quadratic constraints, and show that a upper bound can be calculated via a linear matrix inequality problem. When  combined with  robustness  measures from the  small  gain  theorem,  the new methods are able  provide  less  conservative  robustness  metrics than  can  be  obtained  via   conventional $\mu$-analysis methods. 

\end{abstract}

\begin{IEEEkeywords}
Linear Time Invariant Systems, Robust control, Matrix Phase
\end{IEEEkeywords}

\section{Introduction}
\label{sec:introduction}

\IEEEPARstart{W}{e} consider the robust stability of the stable multiple-input multiple-output (MIMO) linear time-invariant (LTI) plant $G\in\RHi^{n\times n}$ with respect to a stable LTI perturbation $\Delta\in\RHi^{n\times n}$, as shown in Figure \ref{fig:iqc_feedback}. The signals $u_1,u_2$ represent exogenous inputs to the system and $e_1,e_2$ represent internal signals. The \emph{feedback interconnection} so defined is denoted by $(G,\Delta)$.

The problem at hand is to quantify the robustness of the stability of the perturbed system in terms of a metric defining the ``size'' of the set of $\Delta$ that is guaranteed to result in a stable interconnection. This defines a stability margin for $G$, generalizing the classical notion of gain and phase margin for single-input single-output (SISO) systems.

The most famous result for the  robust stability analysis of feedback interconnections is the small gain theorem \cite{Desoer1975}, which defines the set of admissible perturbations in terms of their $\mathcal{L}_2$-gain. In the case of LTI systems, the $\mathcal{L}_2$ gain is linked to the frequency response by means of the singular values of their transfer matrices, forming the basis of $\mathcal{H}_\infty$ theory. Therefore, for LTI $G$ and $\Delta$, the small gain criterion may be stated as a frequency domain inequality on the singular values of the two subsystems.

\begin{figure}[!h]
	\centering
	\includegraphics{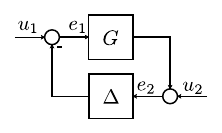}
	\caption{Block diagram of feedback interconnection $(G,\Delta)$.}
	\label{fig:iqc_feedback}
\end{figure}

In practice, much more is known about the perturbation than only its gain, leading to a high level of conservatism when applying the small gain theorem naively. In the case of MIMO systems, we can often assume the perturbation satisfies a certain block diagonal \emph{structure} $\mathbf{\Delta}\subset\RHi^{n\times n}$. Doyle \cite{Doyle1982} addressed the structured robust stability problem by introducing the \emph{structured singular value} $\mu$, a real-valued function on complex matrices that quantifies the minimum destabilizing $\mathcal{L}_2$-gain of the perturbation under the assumption of structure. This gave rise to a broad range of analysis and synthesis techniques with wide applications. In the field of power systems, $\mu$-synthesis techniques have been applied to grid connected inverters with multiparameter distributions \cite{Chen2022}, $\mathcal{H}_\infty$-robust current control for wind turbine generators \cite{Wang2017}  and  the region-wise  small-signal stability of power systems under perturbations \cite{Pan2018}.

When considering the robustness of SISO systems, both the gain and phase margin are considered in conjunction. The small gain theorem straightforwardly provides a notion of gain for MIMO systems. However, an appropriate notion of phase for MIMO systems is much less straightforward to develop. The $\mathcal{L}_2$-gain of a system is invariant under isometries of the input or output -- i.e. it generalizes the property that the gain of a SISO system is invariant when multiplied by a complex number of unit magnitude. It would be desirable to obtain a notion of phase for MIMO systems which has the dual property--that this ``phase'' would be invariant under some type of scaling of the input and output signals by a positive operator. Additionally, in the context of the robustness of the feedback interconnection system depicted in Figure \ref{fig:iqc_feedback}, this notion of phase should lead to a method for the calculation of phase bounds for a set of perturbations that is invariant under such a positive scaling. However, such a notion has proven difficult to formulate. In the following we first review some  historical attempts to formalize phase and phase-based robustness for MIMO LTI systems, and then introduce the phase concepts that will be  employed in our results. 

The well-known passivity theorem \cite{Desoer1975} generalizes, in an imprecise sense,  the notion of phase to MIMO and nonlinear systems. In the LTI case, the notion of positive realness links the properties of a system's transfer function to its passivity. The input and output passivity indices provide a robustness criterion \cite{Bao2007}. The property of passivity is invariant under scaling, as desired, but the passivity \emph{indices} do not generalize this property of SISO phase margins.

An early attempt at defining the phases of a transfer matrix was made by \cite{Postle1981} using the eigenvalues of the unitary part of the polar decomposition. This leads to a frequency-wise stability criterion -- a so-called ``small phase theorem.'' Recently \cite{Pak2020} made use of this definition of phase response to develop  an optimal  controller synthesis under mixed gain and phase constraints. While this approach defines scaling-invariant phases for an arbitrary complex matrix, the resulting small phase theorem depends upon a gain-dependent ``phase modifier'' factor. In the context of robustness analysis, this means that the phase-bounded set of perturbations $\mathbf{\Delta}$ is not invariant under scaling, except by a positive scalar.

Another early contibution to notions of phase for MIMO LTI systems appeared in   \cite{Freudenberg1988},  employing the singular vectors associated with the singular values of the transfer matrix. This  approach provided a  MIMO generalisation  of the Bode gain-phase relationship for SISO systems, a concept that was  further developed in \cite{Chen1998}. However, these works did not present a criterion for the robustness of a plant to a phase-bounded uncertainty, and this definition of phase is not invariant under positive scaling.

\cite{BarOn1990} applied the concept of phase given by \cite{Postle1981} to define the phase margin of a MIMO system in terms of the range of phases of the eigenvalues of a unitary perturbation (as in $\Delta$ in Figure \ref{fig:iqc_feedback}). This approach can be thought of as ``lumping'' the positive semidefinite portion of the perturbation's polar decomposition into the plant. \cite{Wang2008} treats a structured version of this problem, calculating the phase margin of an LTI system with respect to unitary diagonal perturbations. \cite{Chellaboina2008} extends this concept to perturbations with general block diagonal structures, while \cite{Srazhidinov2023} links the phase margin concept with the David-Wielandt shell to produce optimization problems guaranteed to converge to the exact multivariable gain and phase margin of a system. 

The principal limitation of the approaches taken in  \cite{Postle1981}-\cite{Srazhidinov2023} is that the classes of perturbations under consideration depend not only on their phase properties but also their gain properties. From a theoretical point of view, these attempts either do not define a scaling-invariant phase of a system, or fail to determine a scaling-invariant set of admissible perturbations. From a practical point of view, this dependency on gain limits the set of perturbations that may be considered, and increases the conservatism of the results obtained. 

Some efforts have  been made to  develop a notion of system phase based on the \emph{numerical range} of the transfer matrix. The numerical range of a matrix is a convex subset of the complex plane whose angular extent is invariant under congruence transformations, which may be thought of as a symmetric scaling of the system's inputs and outputs. The positive realness of a system's transfer function is equivalent to the requirement that the numerical range of its frequency response lies in the open right half plane, for all frequencies.  Numerical  ranges were used   by \cite{Tits1999} to  define the phase-sensitive structured singular value (PS-SSV).  This was  shown  to improve the conservatism of the structured singular value under the assumption that the numerical range of the perturbation is restricted to an angle bounded sector of the complex plane. \cite{Freudenberg1988} mentions the relationship between the range of phases contained in the numerical range and the angle and phase difference between the input and output vectors, but does not develop the connection further.

Recently,  a definition of the phases of a matrix utilising its numerical  range was given by \cite{WCKQ2020}, based on the so-called sectorial decomposition. These phases are invariant under congruence, in the same way that the singular values of a matrix are invariant under unitary transformations, and hence satisfies the desired scaling invariance property. A corresponding definition of system phase and a small phase theorem for MIMO LTI systems was introduced in \cite{Chen2024}. These definitions of phase are indeed invariant under scaling in the sense of congruence, and unlike previous ``small phase'' conditions, the resulting robustness set is defined only by its phases, and so is invariant under congruence. 

These new notions of system phase employing  matrix numerical range  have led to a range of novel results in robust stability analysis. Mixed gain- and phase-based stability criteria were introduced in \cite{Woolcock2023} and  \cite{Liang2024}.  Some comparisons of stability analysis methods for assessing  the small-signal stability of a wind farm connected to a series-compensated grid were investigated in \cite{Woolcock2024}; the gain/phase based approach was shown to determine stability with less conservatism than an alternative  Nyquist-based method.  In \cite{Wang2024b} phase-based methods were applied to  multi-agent systems  to study the synchronization problem of highly heterogeneous diverse agents. A summary of  recent work in the use  of phase-based  stability studies appeared in  \cite{Wang2024}. 

In  this paper, we further investigate system  stability using the notions of system phase  introduced in  \cite{Chen2024}.  We  aim to extend the robust stability result of \cite{Chen2024} by considering the structured stability problem. Given a block diagonal structure, we define a \emph{structured phase index} function $\psi$ on complex matrices analogous to the structured singular value $\mu$. We show that an upper bound for this function may be obtained by solving LMI optimization problems, and that a lower bound is given by the solution of a non-convex optimization problem. The main result of this paper shows that for a set of perturbations whose phases are bounded by this  phase  index  function,  the stability of the perturbed system is guaranteed, and that this phase-based criterion may be combined on a per-frequency basis with the structured singular value. This result enriches the robustness analysis of feedback interconnections by augmenting the existing $\mu$-based analysis with phase information on the perturbation, and augmenting the existing phase-based analysis of \cite{Chen2024} with structural information on the perturbation.
	
The paper is  organised as  follows: In Section \ref{sec:problem}, we formulate the structured robust stability problem for LTI systems; our task is to quantify the perturbations that yield a stable feedback  connection in terms of the matrix phase of their frequency response. Section \ref{sec:maths} provides a brief introduction to the matrix phase concepts used in the paper. Section \ref{sec:spi} defines a ``structured phase'' analogue to the structured singular value of a complex matrix, and provides an LMI-based method for calculating an upper bound of this quantity, as well as a non-convex optimization problem for computing a lower bound. Section \ref{sec:main} applies the results of Section \ref{sec:spi} to derive the main result of the paper on phase- and gain-bounded perturbations. Section \ref{sec:example} presents an example taken from \cite{Doyle1982} to show that the  criterion may be  less conservative than  purely  gain-based robust stability criteria. 

{\em Notation}: We denote by $\R$ and $\C$ the sets of real and complex numbers. For a complex number $z\in\C$, its phase is denoted by $\angle z$, and taken to be in the interval $(-\pi,\pi]$. $\C^{n\times m}$ denotes the set of $n\times m$ complex matrices. $\C^n$ denotes the set of $n$ dimensional complex column vectors. The $n\times n$ identity matrix is denoted $I_n$. For a matrix $A\in\C^{n\times m}$, we denote its conjugate transpose by $A^*$, and its induced 2-norm (i.e. maximum singular value) by $\|A\|$. If $A$ is square, we denote the inverse of its conjugate transpose by $A^{-*}$, its spectrum by $\sigma(A)$ and its determinant by $|A|$. $A$ has a decomposition given by:
\begin{align}
	A=\Re(A) + j\Im(A),
\end{align}
where $\Re(A)=(A+A^*)/2$ and $\Im(A)=(A-A^*)/2j$ are Hermitian matrices. If $A$ is Hermitian, $A>0$ ($A\geq 0$) means that $A$ is positive (semi-)definite, and $A<0$ ($A\leq 0$) means that $A$ is negative (semi-) definite. The notation $A>B$ ($A\geq B$) means that $A-B$ is positive (semi-)definite.

The 2-norm of a function $f  : [0,\infty) \rightarrow \R^n$ is denoted by:
\begin{align}
	\|f\| := \int_0^\infty f(t)^\T f(t) dt.
\end{align}
We then denote the space of finite-energy functions  as:
\begin{align}
	\L_2^n := \{f : [0,\infty) \rightarrow \R^n : ||f||<\infty\}.
\end{align}
$\L_{2e}^n$ denotes the extended function space:
\begin{align}
	\L_{2e}^n := \{f : [0,\infty)\rightarrow \R^n \: : \: ||P_T f|| < \infty,  \forall T\geq 0\}.
\end{align}
where $P_T$ denotes the truncation operator for any  $T \geq  0$. 

$\RLi$ denotes the space of rational functions with no poles on the imaginary axis. $\RHi$ denotes the subspace with no poles in the RHP. $\RLi^{n\times m}$ and $\RHi^{n\times m}$ denote the spaces of $n\times m$ matrices of functions in $\RLi$ and $\RHi$ respectively. The Fourier transform of a function $f\in\L_2^n$ is denoted $\hat{f}$.

A \emph{sector} of the complex plane is denoted:
\begin{align}
	\mathbf{sec}(\alpha, \beta) := \{\rho e^{j\theta}\: : \: \rho\geq 0, \theta\in[\alpha, \beta]\}.
\end{align}
for some bounding angles $\alpha,\beta\in\R$. The convex hull of a set $X \subset \C^n$ is denoted $\operatorname{Co}(X)$.

\section{Problem Formulation} \label{sec:problem}
We now formally define the robust structured stability problem depicted in Figure \ref{fig:iqc_feedback}. Recall that $G,\Delta\in\RHi^{n\times n}$ are stable, causal LTI systems. They define the following system of equations:
\begin{align}
	e_1 &= -\Delta e_2 + u_1,\\
	e_2 &= G e_1 + u_2,
\end{align}
where $u_1,u_2\in\L_2^n$ and $e_1,e_2\in\L_{2e}^n$. The feedback interconnection of $(G,\Delta)$ is  \emph{well-posed} if, for each pair of inputs $u_1,u_2\in\L_2^n$, there are unique solutions $e_1,e_2\in\L_{2e}^n$. The interconnection is  \emph{stable} if $e_1,e_2\in\L_{2}^n$  for each  $u_1,u_2\in\L_2^n$.  The interconnection is stable if and only if the transfer function $(I+G\Delta)^{-1}\in\RHi^{n\times n}$. 	Given a plant $G$, a perturbation $\Delta$ is  \emph{admissible} with respect to $G$ if $(G,\Delta)$ is a well-posed and stable interconnection.

We will now introduce the notation needed to define a block-diagonal structured set of  perturbations, closely following \cite{zhou1996}. We denote a pair of tuples of positive integers of length $s$ and $b$ respectively, where $s,b\geq 0$ and $s+b>0$:
\begin{align}
	\mathbf{n} &= (n_1, n_2,\dots, n_s), \label{Sdimension} \\
	\mathbf{m} &= (m_1, m_2,\dots, m_b) \label{Bdimension}.
\end{align}
We refer to $\mathbf{n}$ and $\mathbf{m}$ as \emph{block dimension tuples compatible with $G$} if $\sum\mathbf{n}+\sum\mathbf{m}=n$. For the sake of compactness, we will denote such a choice of compatible block dimensions as $\chi=(\mathbf{n}, \mathbf{m})$.  A choice of $\chi$ then defines a set of \emph{structured matrices} as follows:
\begin{align} \label{eq:problem:Bmatrix}
	\begin{split}
		\mathbf{B}_\chi := \{\mathrm{blkdiag}(b_1 I_{n_1}, b_2 I_{n_2}, \dots, b_s I_{n_s},\\
		B_1, B_2,\dots, B_b) \: : \: b_k\in\C\text{ and }B_k\in\C^{m_k\times m_k}\\ \text{ for }k=1,\dots,s\text{ and } i=1,\dots,b\}.
	\end{split}
\end{align}
The first $s$ blocks of the form $b_i I_{n_i}$ are called \emph{scalar blocks}, while the remaining $b$ blocks are called \emph{full blocks}. We then define the set of \emph{structured perturbations} whose frequency response belongs to $\mathbf{B}_\chi$.
\begin{definition}
	For a given choice of block dimensions $\chi$ compatible with $G$, we define a \emph{set of structured perturbations}:
	\begin{align} \label{eq:metric:Delta}
		\mathbf{\Delta}_\chi := \{\Delta\in\RHi^{n\times n} \: : \: \Delta(j\omega)\in\mathbf{B}_\chi  \,\forall \omega\in[0,\infty]\}.
	\end{align}
\end{definition}
The  {\it  structured robust stability problem}   involves finding the largest set of admissible perturbations contained in this structured set according to some metric. The structured singular value introduced by  \cite{Doyle1982} identifies such a set according to a frequency-dependent gain bound.  Our Theorem \ref{thm:spr:phase_stability}  addresses  the analogous phase-bounded problem:

\begin{problem} \label{Prob1}
	Given a plant $G$ and a set of structured perturbations, introduce a frequency-dependent phase bound to identify the admissible  perturbations.
\end{problem}

A  special  case of Problem  \ref{Prob1} is the case  of  unstructured perturbations, for which   $\mathbf{\Delta}=\RHi^{n\times n}$. This was addressed in  the small phase theorem of \cite{Chen2024}. Thus the first result of this paper can be thought of as extending the small phase theorem to the general case of structured perturbations.

Our Theorem \ref{thm:spr:mixed_stability}  will provide less conservative solutions for  the  structured robust stability problem problem  by  addressing  the combined  gain and phase bounded  problem:

\begin{problem} \label{Prob2}
	Given a plant $G$ and a set of structured perturbations, provide  
	mixed gain  and  phase criteria  to identify  the admissible  perturbations.	
\end{problem}

\section{Mathematical Preliminaries} \label{sec:maths}
In this section  we introduce  the results on matrix phase that will  be  needed for  our robustness  analysis. The sources are \cite{WCKQ2020}, \cite{Chen2024}  and \cite{Horn1991}. 

\begin{definition}
	The \emph{numerical range} of a matrix $A\in\C^{n\times n}$ is:
	\begin{align}
		W(A) := \{ x^*Ax \: | \: x\in\C^{n}, \, \|x\|=1 \}.
	\end{align}
\end{definition}
Some key properties of the numerical range are:
\begin{enumerate}[label=(W\arabic*)]
	\item $W(A)$ is a convex and compact set in $\C$. \label{prop:phase:Wconvex}
	\item $\sigma(A)\subseteq W(A)$. \label{prop:phase:Wspectrum}
	\item $W(cA)=cW(A)$ for $c\in\C$. \label{prop:phase:Wscalar}
	\item $W(\Re(A)) = \Re(W(A))$. \label{prop:phase:Wreal}
	\item $W(\Re(A)) = \Im(W(A))$. \label{prop:phase:Wimag}
	\item $W(\blkdiag(A_1,A_2)) = \operatorname{Co}(W(A_1)\cup W(A_2))$. \label{prop:phase:Wblkdiag}
\end{enumerate}

If we consider the vector $v\in\C^n$ the input and $Av$ the output of $A$, the range of phases contained within $W(A)$ can be thought of as encoding the range of possible phases of the inner product of the input and output, which will provide the basis of the definition of the phases of $A$. This is analogous to the relationship between the singular values of a matrix and the ratio of the 2-norm of its input and output.

\begin{definition} \label{defn:phase:anr}
	The \emph{angular numerical range} of a matrix $A\in\C^{n\times n}$ is:
	\begin{align}
		W'(A) := \{ x^*Ax \: | \: x\in\C^{n}, \, \|x\|\neq 0 \}.
	\end{align}
\end{definition}
It can be seen that $W'(A)$ is a convex cone, i.e. $\alpha x + \beta y \in W'(A)$ for any $\alpha,\beta\geq 0$ and $x,y\in W'(A)$. Specifically, it is the smallest convex cone that contains $W(A)$.
\begin{definition}
	The \emph{field angle} of a matrix $A\in\C^{n\times n}$ is denoted $\Theta(A)$. If $0$ is contained in the interior of $W(A)$, $\Theta(A)$ is defined to be $2\pi$. Otherwise, $\Theta(A)$ is the minimum angle between a pair of rays from the origin that subtend $W(A)$.
\end{definition}

\begin{figure}
	\centering
	\subcaptionbox{}
	{\includegraphics[scale=0.8]{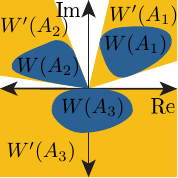}}
	\subcaptionbox{}
	{\includegraphics[scale=0.8]{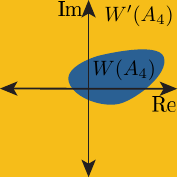}}
	\caption{$A_1$ is sectorial, $A_2$ is quasi-sectorial but not sectorial, $A_3$ is semi-sectorial but not quasi-sectorial, and $A_4$ is non-semi-sectorial.}
	\label{fig:phase:sectoriality}
\end{figure}
Clearly, if $0\notin W(A)$, $\Theta(A)<\pi$ by the convexity of $W(A)$. Such a matrix is known as a \emph{sectorial} matrix. More generally, we categorise matrices by their field angle as follows:
\begin{enumerate}
	\item If $\Theta(A)<\pi$, then $A$ is \emph{quasi-sectorial.} All sectorial matrices are quasi-sectorial, but quasi-sectorial matrices may contain $0$ in their numerical range at a sharp point on its boundary. This case is illustrated by $A_2$ in Figure \ref{fig:phase:sectoriality}.
	\item If $\Theta(A)\leq\pi$, then $A$ is \emph{semi-sectorial.} Semi-sectorial matrices may contain $0$ in the boundary of their numerical range, not necessarily at a sharp point. This case is illustrated by $A_3$ in Figure \ref{fig:phase:sectoriality}.
	\item If $\Theta(A)=2\pi$, then $A$ is \emph{non-semi-sectorial.} A matrix is non-semi-sectorial if and only if $0$ is in the interior of its numerical range. This case is illustrated by $A_4$ in Figure \ref{fig:phase:sectoriality}.
\end{enumerate}

\begin{remark}
	The geometric interpretation of $\Theta(A)$ is that $W(A)\subseteq\mathbf{sec}(\alpha,\beta)$ with $\beta-\alpha=\Theta(A)$. This means that a semi-sectorial matrix has a numerical range lying in a  convex sector of the complex plane.
\end{remark}

A sectorial matrix has a \emph{sectorial decomposition} as follows:
\begin{align}
	A=T^* D T,
\end{align}
where $D,  T\in\C^{n\times n}$ are invertible and $D=\diag(d_1,\dots,d_n)$ with $|d_i|=1$ for $i=1,\dots,n$. From Definition \ref{defn:phase:anr}, we obtain
\begin{align}
	W'(A) = \{k_1 d_1 + \dots + k_n d_n\: : \: k_1,\dots,k_n \geq 0 \}\setminus \{0\}.
\end{align}
Therefore, the phases of the diagonal elements determine the angular extent of $W'(A)$ and $W(A)$. 
\begin{definition}[Matrix phase\cite{Chen2024}]
	The \emph{matrix phases} of a sectorial matrix $A\in\C^{n\times n}$ are the $n$ real numbers denoted:
	\begin{align}
		\ophi(A) = \phi_1(A) \geq \phi_2(A) \geq \dots \phi_n(A) = \uphi(A),
	\end{align}
	chosen modulo $2\pi$ so that the \emph{phase center} $\gamma(A)=(\ophi(A) + \uphi(A))/2$ satisfies $\gamma(A)\in(-\pi,\pi]$, and:
	\begin{align}
		A = T^* \diag(e^{j\phi_1(A)}, e^{j\phi_2(A)}, \dots, e^{j\phi_n(A)}) T
	\end{align}
	is a sectorial decomposition of $A$ for some invertible $T\in\C^{n\times n}$.
\end{definition}

This definition may be extended to quasi-sectorial matrices by noting that if $0$ is an eigenvalue of a quasi-sectorial matrix $A\in\C^{n\times n}$, it must be a normal eigenvalue \cite{Horn1991}.  A  quasi-sectorial matrix  $A$ permits a decomposition of the form \cite{Chen2024}:
\begin{align} 
	A = U^* \bmat 0 & 0 \\ 0 & \hat{A} \emat U, \label{eq:phase:quasi}
\end{align}
where $\hat{A}\in\C^{m\times m}$ is a sectorial matrix with $m\leq n$ and $U\in\C^{n\times n}$ is unitary. Then $A$ is defined to have $m$ phases, which are the phases of $\hat{A}$. The sectorial decomposition can be further generalised to semi-sectorial matrices, with the details given in \cite{Chen2024}.

\begin{figure}
	\centering
	\includegraphics[scale=0.9]{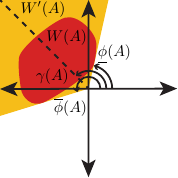}
	\caption{Relationship between $\uphi(A)$, $\ophi(A)$ and $W(A)$.}
	\label{fig:phase:matphase_illust}
\end{figure}
The phases $\ophi(A)$ and $\uphi(A)$ may be interpreted geometrically as the angles of the most anticlockwise and clockwise rays from the origin subtending $W(A)$, while $\gamma(A)$ is the angle of the ray bisecting the sector $W'(A)$, as depicted in Figure \ref{fig:phase:matphase_illust}. We also have $\Theta(A)=\ophi(A)-\uphi(A)$.
This definition of matrix phase can be used to bound the phases of the eigenvalues of a product of two matrices, which translates directly into the phase-based stability results given in \cite{Chen2024}.
\begin{lemma} \label{lemma:phase:matrix_spt}
	Suppose $A,B\in\C^{n\times n}$ are semi-sectorial matrices. If $\lambda\in\sigma(AB)$, then:
	\begin{align}
		\uphi(A) + \uphi(B) \leq \angle\lambda \leq \ophi(A) + \ophi(B),
	\end{align}
	up to a multiple of $2\pi$.
\end{lemma}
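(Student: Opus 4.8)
The plan is to turn the statement into a pair of constraints on the arguments of two quadratic forms, using the geometric interpretation of $\ophi$ and $\uphi$ as the angles of the extreme rays subtending $W(A)$. The fact I will rely on is that $W(A)\subseteq\mathbf{sec}(\uphi(A),\ophi(A))$ and $W(B)\subseteq\mathbf{sec}(\uphi(B),\ophi(B))$, so that for any nonzero $v$ with $v^*Av\neq 0$ one has $\angle(v^*Av)\in[\uphi(A),\ophi(A)]$, and likewise for $B$. I would first discard the vacuous case $\lambda=0$ and assume $\lambda\neq 0$, fixing an eigenvector $x\neq 0$ with $ABx=\lambda x$.

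The core is a two-form computation. Setting $y:=Bx$, the hypotheses $\lambda\neq0$, $x\neq0$ give $y\neq0$ and $Ay=\lambda x$. Evaluating the quadratic form of $B$ at $x$ gives
\begin{align}
	x^*Bx = x^*y,
\end{align}
while evaluating the quadratic form of $A$ at $y$ and substituting $Ay=\lambda x$ gives
\begin{align}
	y^*Ay = \lambda\,y^*x = \lambda\,\overline{x^*y}.
\end{align}
Hence $\angle(y^*Ay)=\angle\lambda-\angle(x^*y)$ modulo $2\pi$. Combining with $\angle(x^*y)=\angle(x^*Bx)\in[\uphi(B),\ophi(B)]$ and $\angle(y^*Ay)\in[\uphi(A),\ophi(A)]$ and adding the two intervals yields
\begin{align}
	\angle\lambda=\angle(x^*y)+\angle(y^*Ay)\in[\uphi(A)+\uphi(B),\ \ophi(A)+\ophi(B)]
\end{align}
modulo $2\pi$, which is the claim.

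This argument is complete whenever $A$ and $B$ are \emph{sectorial}, since then $0\notin W(A)$ and $0\notin W(B)$ guarantee $x^*Bx\neq0$ and $y^*Ay\neq0$, so neither form degenerates. I expect the only real obstacle to be the merely \emph{semi-sectorial} case, where $0$ may lie on $\partial W(A)$ or $\partial W(B)$ and the eigenvector may make $x^*Bx=0$ (equivalently, since $\lambda\neq0$, $y^*Ay=0$), leaving the arguments above undefined. To cover this I would use a perturbation-and-continuity argument: for small $\epsilon>0$ replace $A$ and $B$ by $A+\epsilon e^{j\gamma(A)}I$ and $B+\epsilon e^{j\gamma(B)}I$, which by the translation property $W(A+cI)=W(A)+c$ shift $W(A)$ and $W(B)$ along their phase-centre rays and hence are sectorial (the half-angle being at most $\pi/2$ keeps $0$ outside the shifted ranges), apply the sectorial result, and let $\epsilon\to0$ using continuity of the eigenvalues together with convergence of the extreme phases to $\uphi,\ophi$ of the originals. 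Some care is also needed with the ``modulo $2\pi$'' bookkeeping, since $\ophi(A)+\ophi(B)$ need not lie in $(-\pi,\pi]$; I would carry the fixed real representatives given by the bounding-ray angles throughout and reconcile them with the principal value of $\angle\lambda$ only at the end.
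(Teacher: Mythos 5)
Your argument is correct, but note that the paper itself gives no proof of Lemma \ref{lemma:phase:matrix_spt}: it is imported as a preliminary from \cite{Chen2024}/\cite{WCKQ2020}, so there is no in-paper proof to compare against. Your two-form computation ($y=Bx$, $x^*Bx=x^*y$, $y^*Ay=\lambda\,\overline{x^*y}$, hence $\angle\lambda\equiv\angle(y^*Ay)+\angle(x^*Bx)$ modulo $2\pi$) is precisely the standard eigenvector/quadratic-form argument used in that literature, and it is airtight in the sectorial case, where neither form can vanish. Your handling of the degenerate semi-sectorial case is also sound, and your observation that $x^*Bx=0$ and $y^*Ay=0$ occur together (given $\lambda\neq 0$) correctly isolates the only failure mode. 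One small tightening makes the limiting step cleaner than what you wrote: rather than invoking ``convergence of the extreme phases'' of the shifted matrices, observe that $\mathbf{sec}(\uphi(A),\ophi(A))$ is a closed convex cone containing both $W(A)$ and the shift $\epsilon e^{j\gamma(A)}$, so $W(A+\epsilon e^{j\gamma(A)}I)=W(A)+\epsilon e^{j\gamma(A)}$ stays inside that cone; hence the perturbed phase intervals are \emph{nested} inside the original ones for every $\epsilon>0$, and the bound on $\angle\lambda_\epsilon$ holds uniformly in the fixed compact interval $[\uphi(A)+\uphi(B),\ophi(A)+\ophi(B)]$. Since the shifted matrices are sectorial (thus invertible, so $\lambda_\epsilon\neq 0$), eigenvalue continuity plus compactness of that interval then transfers the bound to $\lambda$ exactly as you intend, with the modulo-$2\pi$ bookkeeping resolved at the end as you describe.
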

A direct corollary is that if:
\begin{align}
	\ophi(A) + \ophi(B) &< \pi,\\
	\uphi(A) + \uphi(B) &> -\pi,
\end{align}
we may guarantee:
\begin{align}
	|I+AB|\neq 0.
\end{align}

We collect a number of simple properties of matrix phases here.
\begin{enumerate}[label=(P\arabic*)]
	\item $\phi_i(zA)=\angle{z}+\phi_i(A)$ for any nonzero $z\in\C$ with $\angle{z}\in(-\pi-\gamma(A), \pi-\gamma(A)]$. \label{prop:phase:Pscalar}
	\item $\Theta(zA)=\Theta(A)$ for any nonzero $z\in\C$.
	\item A (quasi-)sectorial matrix is positive (semi-)definite if and only if all its phases are zero.
	\item If $A$ is semi-sectorial and invertible, $\ophi(A^{-1})=-\uphi(A)$ and $\uphi(A^{-1})=-\ophi(A)$. \label{prop:phase:Pinv}
	\item If $X_1,\dots,X_k$ are semi-sectorial matrices with phases lying in $[\alpha,\beta]$, where $\beta-\alpha\leq\pi$, then $X=\blkdiag(X_1,\dots,X_k)$ is a semi-sectorial matrix with $\uphi(X)\geq\alpha$ and $\ophi(X)\leq\beta$. \label{prop:phase:Pblkdiag}
\end{enumerate}

\section{Structured Phase Index} \label{sec:spi}
Motivated by the problem of defining a frequency-dependent phase bound on the set of admissible perturbations, we consider sets of structured matrices $\mathbf{B}_\chi$ as defined in \eqref{eq:problem:Bmatrix}.  We recall the definition of the structured singular value.
\begin{definition}[Structured singular value \cite{zhou1996}] \label{def:mu}
	The structured singular value of a matrix $A\in\C^{n\times n}$ with respect to a set of structured matrices $\mathbf{B}_\chi$ as defined in \eqref{eq:problem:Bmatrix} is defined as:
	\begin{align}
		\mu_\chi(A) := \frac{1}{\min\{\|B\| : |I+A B| = 0,\, B\in\mathbf{B}_\chi\}}
	\end{align}
\end{definition}
An upper bound for $\mu$ may be calculated by the use of so-called \emph{$D$-scaling} matrices belonging to the following set:
\begin{align}	\begin{split}
		\mathbf{D}_\chi := \{\mathrm{blkdiag}(D_1,D_2,\dots,D_s,\\d_1 I_{m_1},d_2 I_{m_2},\dots, d_{b}I_{m_b})\: : \: D_k\in\C^{n_k\times n_k}\text{ and } d_k\in\C\ \\ \text{ for } i=1,\dots,s\text{ and } k=1,\dots,b\}.
	\end{split} \label{eq:problem:D_structure}
\end{align}
For each scalar block of $B\in\mathbf{B}_\chi$, the matrix  $D\in\mathbf{D}_\chi$  has a full block in the corresponding position, and vice versa, and so $DB=B D$. Then, the following is an upper bound for $\mu$\cite{Doyle1982}:
\begin{align} \label{eq:problem:mu_bound}
	\mu_\chi(A)\leq \overline{\mu}_\chi(A) := \inf_{D\in\mathbf{D}_\chi, |D|\neq 0} \|D A D^{-1}\|.
\end{align}
An optimal $D$-scaling matrix may be found via LMI (and other methods,) as implemented in MATLAB's \texttt{mussv} routine.

Our aim in this section will  be to  develop an  analogous ``structured phase'' measure on complex matrices.  We then provide upper and lower bounds on this quantity, the former of which can be computed using LMI methods. This definition will be used in Section \ref{sec:main} to obtain our main results on the robust stability of feedback interconnections.

\begin{figure}
	\centering
	\includegraphics{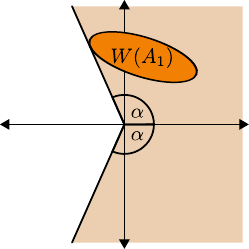}
	\caption{A sectorial matrix $A_1$ with phase index $\alpha$.}
	\label{fig:phase:radius}
\end{figure}

We  begin  by introducing the notion of a \emph{phase index}, intended as a phase analogue to the largest singular value of a matrix.
\begin{definition}
	The \emph{phase index} of a matrix $A\in\C^{n\times n}$ is defined as follows: 
	\begin{align}
		\phi(A) := \sup_{z\in W(A)} |\angle z|.
	\end{align}
\end{definition}
A matrix with phase index $\alpha\in[0,\pi]$ has numerical range $W(A)$ that lies within $\mathbf{sec}(-\alpha,\alpha)$, as depicted in Figure \ref{fig:phase:radius}. It follows that if the negative real line is contained in $W'(A)$, $\phi(A)=\pi$. This is the case when $A$ is non-semi-sectorial, but is also the case when $A$ is semi-sectorial and $\pi$ or $-\pi\in[\uphi(A), \ophi(A)]$. More precisely, if $A$ is semi-sectorial, then:
\begin{align}
	\phi(A) = \min(\pi, \max(|\uphi(A)|, |\ophi(A)|)).
\end{align}
It then follows from Lemma \ref{lemma:phase:matrix_spt} that if:
\begin{align}
	\phi(A) + \phi(B) < \pi, \label{eq:phase:pindex_spt}
\end{align}
for some $A,B\in\C^{n\times n}$, we have $|I+AB|\neq 0$. We also have the following lemma:
\begin{lemma} \label{lemma:spi:DeltaD}
	Suppose $B\in\mathbf{B}_\chi$ and $D\in\mathbf{D}_\chi$ satisfy $\Theta(B)+\Theta(D)\leq \pi/2$. Then:
	\begin{align}
		\phi(BD) \leq \phi(B)+\phi(D).
	\end{align}
\end{lemma}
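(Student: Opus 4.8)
The plan is to exploit the commutativity $BD=DB$ noted just before the lemma, which forces $BD$ to be block diagonal: in each of the first $s$ (scalar) slots of $B$ we obtain a block $b_k D_k$, and in each of the remaining $b$ (full) slots we obtain $d_k B_k$. Writing $BD$ in this form, I would apply property (W6) repeatedly to express $W(BD)$ as the convex hull of the block numerical ranges $W(b_k D_k)$ and $W(d_k B_k)$, and then use (W3) to write each of these as a rotated copy $b_k W(D_k)$, respectively $d_k W(B_k)$, of the numerical range of the underlying block.

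The second step is to locate the phases of these rotated sets. Since $W(D_k)\subseteq W(D)$ and $W(B_k)\subseteq W(B)$, and since $b_k\in W(B)$ and $d_k\in W(D)$, property (P1) together with the interpretation of $\uphi,\ophi$ as the extreme subtending rays of the numerical range gives $\uphi(b_k D_k)=\angle b_k+\uphi(D_k)\geq \uphi(B)+\uphi(D)$ and $\ophi(b_k D_k)=\angle b_k+\ophi(D_k)\leq \ophi(B)+\ophi(D)$, with symmetric estimates for $d_k B_k$. Hence every block numerical range, and thus their union, lies in the sector $\mathbf{sec}(\uphi(B)+\uphi(D),\,\ophi(B)+\ophi(D))$, whose angular width is exactly $\Theta(B)+\Theta(D)\leq\pi/2$.

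Finally I would dispose of the trivial case $\phi(B)+\phi(D)\geq\pi$, in which the claim is immediate because $\phi(\cdot)\leq\pi$ always. In the remaining case $\phi(B)+\phi(D)<\pi$, the estimates above force $|\uphi(B)+\uphi(D)|\leq\phi(B)+\phi(D)<\pi$ and $|\ophi(B)+\ophi(D)|\leq\phi(B)+\phi(D)<\pi$, so the sector lies in an open half-plane and, having width at most $\pi/2$, is convex; the convex hull $W(BD)$ therefore stays inside it, and $\phi(BD)=\sup_{z\in W(BD)}|\angle z|\leq\max(|\uphi(B)+\uphi(D)|,|\ophi(B)+\ophi(D)|)\leq\phi(B)+\phi(D)$. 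The main obstacle I anticipate is the wrapping phenomenon: a convex hull of numerical ranges straddling the negative real axis can contain points of phase $\pm\pi$ and so spuriously attain phase index $\pi$. The hypothesis $\Theta(B)+\Theta(D)\leq\pi/2$ is exactly what rules this out, by keeping the total angular spread below $\pi$ so that passing to the convex hull cannot enlarge the set of attained phases and the phase-shift bookkeeping modulo $2\pi$ stays honest rather than being corrupted by wraparound.
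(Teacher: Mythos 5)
Your proposal is correct and follows essentially the same route as the paper: the paper's proof uses the identical block decomposition $BD=\blkdiag(b_1D_1,\dots,b_sD_s,d_1B_1,\dots,d_bB_b)$, shifts each block's phases via \ref{prop:phase:Pscalar}, and then invokes \ref{prop:phase:Pblkdiag} to place the phases of $BD$ in $[\uphi(B)+\uphi(D),\,\ophi(B)+\ophi(D)]$ --- your use of (W3) and (W6) plus convexity of the resulting sector is exactly a re-derivation of what \ref{prop:phase:Pblkdiag} encapsulates. Your explicit handling of the trivial case $\phi(B)+\phi(D)\geq\pi$ and of the wrap-around issue is sound and, if anything, makes the bookkeeping more careful than the paper's terse version.
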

\begin{proof}
	If $B$ has block diagonal structure as notated in \eqref{eq:problem:Bmatrix} and $D$ as in \eqref{eq:problem:D_structure}, then we can write:
	\begin{align}
		B D = \blkdiag(\delta_1 D_1,\dots \delta_s D_s, d_1\Delta_1, d_b \Delta_b).
	\end{align}
	By \ref{prop:phase:Pblkdiag} and \ref{prop:phase:Pscalar}, each block's phases lie in the  interval  $[\uphi(B)+\uphi(D), \ophi(B)+\ophi(D)]$. Then, \ref{prop:phase:Pblkdiag} implies that $BD$ is a semi-sectorial matrix with phases in that interval, from which the result follows.
\end{proof}
We now define the \emph{structured phase index},  in analogy to the structured singular value. 
\begin{definition} \label{defn:spr}
	Given a matrix $A \in \C^{n\times n}$, its \emph{structured phase index} with respect to a compatible block dimension $\chi$ is:
	\begin{align} \label{eq:SPM_def}
			\psi_\chi(A) := \pi - \inf \{\phi(B) \: :\:|I+AB|= 0,\,B\in\mathbf{B}_\chi \}
	\end{align}
\end{definition}

There is no straightforward way to calculate $\psi$. Clearly, it follows from \eqref{eq:phase:pindex_spt} that 
\begin{align} \label{eq:upp_bound}
	\psi_\chi(A)\leq\phi(A).
\end{align}
However, this offers no more information than the small phase theorem itself, and can be an arbitrarily conservative estimate of the true value of $\psi$. We can address this by using $D$-scaling matrices as multipliers.
\begin{lemma} \label{lemma:spi:D_multipliers}
	For any $A\in\C^{n\times n}$, $B\in\mathbf{B}_\chi$ and invertible $D\in\mathbf{D}_\chi$, if:
	\begin{align}
		\phi(B) &< \pi - \phi(AD) - \phi(D), \label{eq:spi:phi_req} \\
		\Theta(B) &\leq \pi - \Theta(D), \label{eq:spi:Theta_req}
	\end{align}
	are satisfied, we have $|I+AB|\neq 0$.
\end{lemma}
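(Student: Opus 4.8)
The plan is to exploit the commuting block structure of $\mathbf{B}_\chi$ and $\mathbf{D}_\chi$ in order to factor the loop gain through the scaling $D$. The key algebraic observation is that $(AD)(D^{-1}B) = AB$ holds identically, so that $|I+AB| = |I + (AD)(D^{-1}B)|$. It therefore suffices to show that the pair $AD$ and $D^{-1}B$ meets the phase-separation criterion \eqref{eq:phase:pindex_spt}; that is, I would aim to prove that both factors are semi-sectorial and that $\phi(AD) + \phi(D^{-1}B) < \pi$. Note first that rearranging \eqref{eq:spi:phi_req} gives $\phi(AD) + \phi(D) + \phi(B) < \pi$, so each of the three nonnegative quantities $\phi(AD)$, $\phi(D)$, $\phi(B)$ is strictly less than $\pi$; this keeps all the phase-index formulas uncapped and will make the later triangle-inequality estimates clean.

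First I would dispatch the factor $AD$. Since $\phi(AD) < \pi$, the definition of the phase index forces $0\notin\operatorname{int} W(AD)$; by the trichotomy of the field angle this means $\Theta(AD)\le\pi$, so $AD$ is semi-sectorial and its phases are genuinely controlled by $\phi(AD)$.

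The technical core is the factor $D^{-1}B$, where I would essentially replay the block-diagonal argument of Lemma~\ref{lemma:spi:DeltaD} with $D^{-1}$ in place of $D$. Because $D^{-1}\in\mathbf{D}_\chi$ shares the block partition of $B\in\mathbf{B}_\chi$, the product $D^{-1}B$ is block diagonal with each block a scalar multiple of a matrix: a block $b_k D_k^{-1}$ in each scalar-block position of $B$, and a block $d_k^{-1}B_k$ in each full-block position. Applying the inversion rule \ref{prop:phase:Pinv} and the scalar-shift rule \ref{prop:phase:Pscalar} block by block, every block has its phases inside the interval $[\uphi(B)-\ophi(D),\,\ophi(B)-\uphi(D)]$, whose width is exactly $\Theta(B)+\Theta(D)$. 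The field-angle hypothesis \eqref{eq:spi:Theta_req} guarantees this width is at most $\pi$, so \ref{prop:phase:Pblkdiag} applies and certifies that $D^{-1}B$ is semi-sectorial with phases in that interval. A triangle-inequality estimate on the endpoints, using $\phi(B)\ge\max(|\uphi(B)|,|\ophi(B)|)$ and the analogue for $D$, then yields $\phi(D^{-1}B)\le\phi(B)+\phi(D)$.

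Combining the two estimates with \eqref{eq:spi:phi_req} gives $\phi(AD)+\phi(D^{-1}B)\le\phi(AD)+\phi(B)+\phi(D) < \pi$, and since both factors have now been shown to be semi-sectorial, the phase-separation criterion \eqref{eq:phase:pindex_spt} yields $|I+(AD)(D^{-1}B)|\ne 0$, i.e. $|I+AB|\ne 0$, as required. The main obstacle I anticipate is bookkeeping the field angle rather than merely the phase index: the criterion \eqref{eq:phase:pindex_spt} is valid only for semi-sectorial matrices, so the role of \eqref{eq:spi:Theta_req} is precisely to hold the combined phase spread of $D^{-1}B$ within a half-turn. In carrying out the block computation I would also need to confirm that the scalar-shift rule \ref{prop:phase:Pscalar} is invoked within its admissible angular range in each block, which is where the smallness guaranteed by $\phi(B),\phi(D)<\pi$ and the width bound $\Theta(B)+\Theta(D)\le\pi$ become essential.
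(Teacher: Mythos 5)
Your proof is correct and takes essentially the same route as the paper's: factor $AB=(AD)(D^{-1}B)$, establish $\phi(D^{-1}B)\le\phi(B)+\phi(D)$ from the commuting block structure, and conclude via Lemma~\ref{lemma:phase:matrix_spt}. The only real difference is that where the paper cites Lemma~\ref{lemma:spi:DeltaD} together with \ref{prop:phase:Pinv}, you re-derive that bound directly for $D^{-1}B$ under the width condition $\Theta(B)+\Theta(D)\le\pi$; this is if anything slightly more careful, since Lemma~\ref{lemma:spi:DeltaD} as stated assumes $\Theta(B)+\Theta(D)\le\pi/2$, while its proof (and your inlined version) only needs the weaker bound actually guaranteed by \eqref{eq:spi:Theta_req}.
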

\begin{proof}
	Because \eqref{eq:spi:Theta_req} is satisfied, we may invoke Lemma \ref{lemma:spi:DeltaD} and \ref{prop:phase:Pinv} to state $\phi(D^{-1}B)\leq \phi(D)+\phi(B)$. From \eqref{eq:spi:phi_req} we also have $\phi(AD)<\pi-\phi(B)-\phi(D)$. By applying Lemma \ref{lemma:phase:matrix_spt} to $AD$ and $D^{-1}B$, the conclusion follows.
\end{proof}

A corollary of this result is that if Lemma \ref{lemma:spi:D_multipliers} is satisfied by some particular choice of $D$, it is also satisfied by $\tilde{D}=e^{-j\gamma(D)}D$, for which:
\begin{align}
	\phi(\tilde{D}) = \Theta(\tilde{D})/2.
\end{align}
It follows that when searching for $D$-scaling matrices satisfying Lemma \ref{lemma:spi:D_multipliers}, we need only search over $D$-scaling matrices satisfying $\phi(D)<\pi/2$, i.e. $\Re(D)>0$.

In order to prove $\psi_\chi(A)<\alpha$, we need to prove that $|I+AB|\neq 0$ for any $B\in\mathbf{B}_\chi$ satisfying only $\phi(B)\leq \pi-\alpha$. A consequence of this is that if we wish to prove this for $\alpha\leq\pi/2$, we require $\Theta(D)=0$. The previous corollary implies that we need only search for a positive definite $D$ in this case that satisfies the requirement:
\begin{align}
	\phi(AD) < \alpha.
\end{align}
Therefore, if we search for a positive definite $D$ satisfying Lemma \ref{lemma:spi:D_multipliers}, and fail to find one, it follows that we cannot use $D$ multipliers to prove that $\alpha\leq\pi/2$. More generally, we require $\Theta(D)\leq\max(0, 2\alpha-\pi)$, and we need only search over $D$ satisfying:
\begin{align}
	\phi(D) &\leq \max(0, \alpha - \pi/2),  \label{eq:spi:D_constraint}\\
	\phi(AD) + \phi(D) &< \alpha. \label{eq:spi:AD_constraint}
\end{align}
This leads to the definition of an upper bound on $\psi$ as follows.
\begin{definition}
	\begin{align}
		\begin{split}
			\overline{\psi}_\chi(A) := \inf\{\alpha\in[0,\pi)\: :\: \exists D\in\mathbf{D}_\chi \text{ with } |D|\neq 0\\ \text{ satisfying \eqref{eq:spi:D_constraint}-\eqref{eq:spi:AD_constraint} for }\alpha\}. 
		\end{split} \label{eq:spi:opsi}
	\end{align}
\end{definition}
It is clear from Lemma \ref{lemma:spi:D_multipliers} that this is indeed an upper bound for $\psi$, i.e. that:
\begin{align}
	\overline{\psi}(A) + \phi(B) <\pi,
\end{align}
implies $|I+AB|\neq 0$. In the following, we will show that it may be calculated efficiently using LMI optimization problems. We will use the following technical lemma.
\begin{lemma} \label{lemma:phase:sector_lmi}
	A matrix $A\in\C^{n\times n}$ is quasi-sectorial with:
	\begin{align}
		\phi(A)\leq\arctan\kappa, \label{eq:phase:sector_lmi_pindex}
	\end{align}
	for some $\kappa\geq 0$ if and only if the following matrix inequalities are satisfied:
	\begin{align}
		\kappa\Re(A) &\geq \Im(A), \label{eq:phase:sector_lmi:lmi1} \\
		\kappa\Re(A) &\geq -\Im(A). \label{eq:phase:sector_lmi:lmi2}
	\end{align}
\end{lemma}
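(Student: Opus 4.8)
The plan is to prove the biconditional by translating the geometric sector‑containment condition defining the phase index into a pair of Hermitian‑form inequalities. The key observation is that for any $x\in\C^n$ the real and imaginary parts of the scalar $z=x^*Ax$ are themselves Hermitian quadratic forms: a direct computation from $\Re(A)=(A+A^*)/2$ and $\Im(A)=(A-A^*)/2j$ gives $\Re z = x^*\Re(A)x$ and $\Im z = x^*\Im(A)x$. Thus the points of $W(A)$ are exactly the pairs $(x^*\Re(A)x,\,x^*\Im(A)x)$ as $x$ ranges over unit vectors, and the whole argument reduces to a statement about these two forms. I would also record the elementary planar fact that, for $\kappa>0$, a point $z$ lies in $\mathbf{sec}(-\arctan\kappa,\arctan\kappa)$ if and only if $\kappa\Re z\geq\Im z$ and $\kappa\Re z\geq-\Im z$; equivalently $\kappa\Re z\geq|\Im z|$, which in particular forces $\Re z\geq 0$.

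For the forward direction I would assume $A$ is quasi‑sectorial with $\phi(A)\leq\arctan\kappa$, so that $W(A)\subseteq\mathbf{sec}(-\arctan\kappa,\arctan\kappa)$. Applying the planar fact to every $z=x^*Ax$ with $\|x\|=1$ and substituting the quadratic‑form identities yields $x^*(\kappa\Re(A)-\Im(A))x\geq 0$ and $x^*(\kappa\Re(A)+\Im(A))x\geq 0$ for all unit $x$, hence for all $x$ by homogeneity. Since a Hermitian matrix $M$ satisfies $M\geq 0$ exactly when $x^*Mx\geq 0$ for all $x$, this is precisely \eqref{eq:phase:sector_lmi:lmi1}--\eqref{eq:phase:sector_lmi:lmi2}. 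For the converse I would reverse these steps: the two LMIs give $x^*(\kappa\Re(A)\mp\Im(A))x\geq 0$ for all $x$, hence $\kappa\Re z\geq|\Im z|$ for every $z\in W(A)$, so $W(A)$ lies in $\mathbf{sec}(-\arctan\kappa,\arctan\kappa)$. This sector has angular extent $2\arctan\kappa<\pi$, whence $\Theta(A)\leq 2\arctan\kappa<\pi$ (so $A$ is quasi‑sectorial), and $\phi(A)=\sup_{z\in W(A)}|\angle z|\leq\arctan\kappa$ follows immediately.

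The step I expect to be the main obstacle—and the one demanding the most care—is the implication that $\kappa\Re z\geq|\Im z|$ forces $\Re z\geq 0$, i.e.\ that the forms land $W(A)$ in the \emph{right} half‑plane rather than merely in a symmetric double cone. For $\kappa>0$ this is automatic, since $\Re z\geq|\Im z|/\kappa\geq 0$, and the equivalence is clean. The degenerate boundary case $\kappa=0$ is genuinely delicate: there $\arctan\kappa=0$ and the LMIs only yield $\Im(A)=0$ (i.e.\ $A$ Hermitian), which does not by itself force $\phi(A)=0$, as the scalar example $A=-1$ shows. I would therefore either restrict the substantive statement to $\kappa>0$—the case used throughout the sequel, where one seeks $\Re(D)>0$—or treat $\kappa=0$ as a separate boundary requiring the additional hypothesis that $A$ be positive semidefinite, so that $W(A)\subseteq[0,\infty)$ and $\phi(A)=0$ hold as claimed.
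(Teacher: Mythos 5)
Your proof is correct, and it takes a genuinely more elementary route than the paper's. The paper argues at the level of matrix phases: it first combines the two LMIs to get $\Re(A)\geq 0$, hence semi-sectoriality via \ref{prop:phase:Wreal}, then rewrites each LMI as positive semidefiniteness of $\Re((\kappa\mp j)A)$ and invokes the scalar-rotation property \ref{prop:phase:Pscalar} of the sectorial decomposition to conclude $\uphi(A)\geq-\arctan\kappa$ and $\ophi(A)\leq\arctan\kappa$; necessity is obtained by reversing these steps. You bypass the sectorial-decomposition machinery entirely: the identities $\Re(x^*Ax)=x^*\Re(A)x$ and $\Im(x^*Ax)=x^*\Im(A)x$ reduce both directions to the planar fact that $z\in\mathbf{sec}(-\arctan\kappa,\arctan\kappa)$ precisely when $\kappa\Re z\geq|\Im z|$, applied pointwise over $W(A)$. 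What the paper's route buys is consistency with the phase formalism used throughout the rest of the paper (the same rotate-and-test-the-half-plane argument recurs in the other lemmas); what your route buys is self-containedness, since it needs only the definition of the phase index as $\sup_{z\in W(A)}|\angle z|$ and standard facts about Hermitian forms, and it makes the quasi-sectoriality conclusion explicit via $\Theta(A)\leq 2\arctan\kappa<\pi$, a point the paper leaves implicit.

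Your treatment of the boundary case $\kappa=0$ is a genuine improvement rather than pedantry: the paper's sufficiency step ``the inequalities taken together imply $\Re(A)\geq 0$'' silently requires $\kappa>0$, and for $\kappa=0$ the lemma as stated is false (your example $A=-1$ is Hermitian, satisfies \eqref{eq:phase:sector_lmi:lmi1}--\eqref{eq:phase:sector_lmi:lmi2} with $\kappa=0$, yet has $\phi(A)=\pi$). Restricting the statement to $\kappa>0$, as you propose, is harmless for the sequel, since Lemmas \ref{lemma:spr:lmi_pdef} and \ref{lemma:spr:lmi_accr} are both stated only for $\kappa>0$.
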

\begin{proof}
	(Sufficiency:) The matrix inequalities taken together imply $\Re(A)\geq 0$, which by \ref{prop:phase:Wreal} implies that $W(A)$ lies in the closed RHP, and is therefore semi-sectorial. The inequality \eqref{eq:phase:sector_lmi:lmi1} may be rearranged to show:
	\begin{align}
		\Re((\kappa-j)A) \geq 0.
	\end{align}
	By \ref{prop:phase:Pscalar}, this implies that $\uphi(A)\geq -\arctan\kappa$. Similarly, \eqref{eq:phase:sector_lmi:lmi2} may be rearranged to show $\ophi(A)\leq\arctan\kappa$, showing $\phi(A) \leq \arctan\kappa$ as needed.
	
	(Necessity:) The preceding steps may be reversed to show that the phase constraint \eqref{eq:phase:sector_lmi_pindex} imply the LMIs \eqref{eq:phase:sector_lmi:lmi1}-\eqref{eq:phase:sector_lmi:lmi2}.
\end{proof}

The following LMI problem allows the first term of the minimum in \eqref{eq:spi:upsi} to be determined in the case that it is no greater than $\pi/2$.
\begin{lemma} \label{lemma:spr:lmi_pdef}
	Given a matrix $A\in\C^{n\times n}$,
	\begin{align}
		\overline{\psi}_\chi(A)\leq\arctan\kappa \label{lemma:spi:lmi_pdef_bound}
	\end{align}
	for some $\kappa>0$ if and only if:
	\begin{align}
		D &> 0\label{eq:spr:lmi_pdef_1}\\
		\kappa\Re(AD) &\geq \Im(AD) \label{eq:spr:lmi_pdef_2}\\
		\kappa\Re(AD) &\geq -\Im(AD) \label{eq:spr:lmi_pdef_3}
	\end{align}
	for some $D^*=D\in\mathbf{D}_\chi$.
\end{lemma}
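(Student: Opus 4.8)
The plan is to reduce the claimed equivalence to Lemma~\ref{lemma:phase:sector_lmi} applied to the matrix $AD$, after unpacking the definition \eqref{eq:spi:opsi} of $\overline{\psi}_\chi$ in the regime relevant to the bound $\arctan\kappa$. Since $\kappa>0$ is finite, $\arctan\kappa\in(0,\pi/2)$, so I only need the behaviour of \eqref{eq:spi:opsi} for $\alpha\le\pi/2$. In that range $\max(0,\alpha-\pi/2)=0$, so the constraint \eqref{eq:spi:D_constraint} collapses to $\phi(D)=0$; since $\phi(D)\ge 0$ always, this forces $W(D)\subseteq[0,\infty)$, hence $D$ is Hermitian and positive semidefinite, and together with $|D|\neq 0$ this is exactly $D>0$. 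The remaining constraint \eqref{eq:spi:AD_constraint} then reads $\phi(AD)<\alpha$. Thus, for $\alpha\le\pi/2$, the feasibility of $\alpha$ in \eqref{eq:spi:opsi} is equivalent to the existence of $D\in\mathbf{D}_\chi$ with $D>0$ and $\phi(AD)<\alpha$, and the statement of the lemma becomes an assertion about the infimum of $\phi(AD)$ over positive definite $D\in\mathbf{D}_\chi$.

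The bridge to the matrix inequalities is the observation that, by Lemma~\ref{lemma:phase:sector_lmi} applied to $AD$, the conditions \eqref{eq:spr:lmi_pdef_2}--\eqref{eq:spr:lmi_pdef_3} hold (note that summing them forces $\Re(AD)\ge 0$, so $AD$ is automatically semi-sectorial) if and only if $AD$ is quasi-sectorial with $\phi(AD)\le\arctan\kappa$. Hence the entire right-hand side of the lemma---feasibility of \eqref{eq:spr:lmi_pdef_1}--\eqref{eq:spr:lmi_pdef_3} for some $D$---is equivalent to the existence of a positive definite $D\in\mathbf{D}_\chi$ with $\phi(AD)\le\arctan\kappa$.

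For the forward implication I would take $D>0$ satisfying the matrix inequalities, deduce $\phi(AD)\le\arctan\kappa$ as above, and note that $\phi(D)=0$ makes the pair $(\alpha,D)$ feasible in \eqref{eq:spi:opsi} for every $\alpha\in(\arctan\kappa,\pi/2]$, since $\phi(AD)+\phi(D)=\phi(AD)\le\arctan\kappa<\alpha$; taking the infimum over such $\alpha$ gives $\overline{\psi}_\chi(A)\le\arctan\kappa$. For the converse, if $\overline{\psi}_\chi(A)<\arctan\kappa$ strictly, there is a feasible $\alpha<\arctan\kappa$ with an associated $D>0$ obeying $\phi(AD)<\alpha\le\arctan\kappa$, and Lemma~\ref{lemma:phase:sector_lmi} returns the inequalities directly.

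The main obstacle is the boundary case $\overline{\psi}_\chi(A)=\arctan\kappa$, where \eqref{eq:spi:opsi} is feasible only for $\alpha>\arctan\kappa$ and no single $D$ is guaranteed \emph{a priori} to achieve $\phi(AD)\le\arctan\kappa$. To close it I would take $\alpha_k\downarrow\arctan\kappa$ with associated $D_k>0$, normalise by $\|D_k\|=1$ (legitimate because scaling $D$ by a positive constant leaves $\phi(AD)$ unchanged by \ref{prop:phase:Wscalar}), and extract a convergent subsequence $D_k\to D^\star$ in the compact set $\{D\in\mathbf{D}_\chi:D\ge 0,\ \|D\|=1\}$. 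Each $W(AD_k)\subseteq\mathbf{sec}(-\alpha_k,\alpha_k)$, so passing to the limit entrywise gives $W(AD^\star)\subseteq\mathbf{sec}(-\arctan\kappa,\arctan\kappa)$ and hence $\phi(AD^\star)\le\arctan\kappa$. The genuine difficulty is that $D^\star$ need only be positive \emph{semi}definite, so the strict requirement $D>0$ of \eqref{eq:spr:lmi_pdef_1} may fail in the limit; I would resolve this either by showing that the infimum defining $\overline{\psi}_\chi$ is in fact attained at a strictly positive $D$ (an interior/strict-feasibility argument exploiting that $I\in\mathbf{D}_\chi$ is positive definite), or by adopting the standard convention that treats the strict and non-strict feasibility boundaries of an LMI as coincident, as is customary for $D$-scaling computations in robust control. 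Everything outside this boundary step is bookkeeping on the definition together with Lemma~\ref{lemma:phase:sector_lmi}.
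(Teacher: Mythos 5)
Your proposal is correct and takes essentially the same route as the paper: unpack the definition \eqref{eq:spi:opsi} in the regime $\alpha\le\pi/2$ (where \eqref{eq:spi:D_constraint} collapses to $\phi(D)=0$, i.e.\ $D>0$ once invertibility is imposed), then invoke Lemma~\ref{lemma:phase:sector_lmi} applied to $AD$ to translate the phase bound into the matrix inequalities \eqref{eq:spr:lmi_pdef_2}--\eqref{eq:spr:lmi_pdef_3}. One remark worth making: the boundary case $\overline{\psi}_\chi(A)=\arctan\kappa$ that you flag as the ``main obstacle'' is a genuine subtlety, but it is one the paper's own proof silently skips --- the paper only rigorously establishes that feasibility of the LMIs places an upper bound on the infimum (the ``if'' direction) and disposes of necessity with an unelaborated claim of equivalence, never addressing whether the infimum is attained by a strictly positive definite $D$. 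Your compactness sketch (normalising $\|D_k\|=1$ and passing to a limit in the closed sector containment) correctly isolates exactly where the difficulty lies, namely that the limit $D^\star$ may be only positive semidefinite; so on the necessity direction you are, if anything, more careful than the source, and neither your argument nor the paper's fully closes that case without an additional attainment or regularisation argument.
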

\begin{proof}
	We will proceed by showing that these LMI conditions are equivalent to \eqref{eq:spi:D_constraint}-\eqref{eq:spi:AD_constraint} for all $\alpha>\arctan\kappa$, placing an upper bound on the infimum in \eqref{eq:spi:opsi}. Firstly, $D>0$ if and only if $\phi(D)=0$ and $D$ is invertible, so the invertibility requirement and \eqref{eq:spi:D_constraint} are equivalent to \eqref{eq:spr:lmi_pdef_1} for any $\alpha$. Lemma \ref{lemma:phase:sector_lmi} then implies that \eqref{eq:spr:lmi_pdef_2}-\eqref{eq:spr:lmi_pdef_3} are equivalent to the condition $\phi(AD)\leq \arctan\kappa$, and so \eqref{eq:spi:AD_constraint} is satisfied for all $\alpha>\arctan\kappa$, as required.
\end{proof}

\begin{remark}
The problem of finding the smallest $\kappa$ for which Lemma \ref{lemma:spr:lmi_pdef} is feasible is a type of LMI optimization problem called the \emph{generalized eigenvalue problem.} Such problems are quasi-convex, and solvers exist that are guaranteed to converge on the infimum for $\kappa$ in polynomial time. In the case where $A$ is nonsingular, MATLAB's \texttt{gevp} provides an efficient solution to the problem. If $A$ is singular, numerical issues arise because \eqref{eq:spr:lmi_pdef_2}-\eqref{eq:spr:lmi_pdef_3} cannot be strictly satisfied. In this case, an approach using the \texttt{feasp} solver with a bisection method to estimate the minimum $\kappa$ is required.
\end{remark}

\begin{lemma} \label{lemma:spr:lmi_accr}
	Given a matrix $A\in\C^{n\times n}$, 
	\begin{align}
		\overline{\psi}_\chi(A) \leq \pi/2 + \arctan\kappa
	\end{align}
	for some $\kappa>0$ if and only if:
	\begin{align}
		\Re(D) & > 0 \label{eq:spi:lmi_accr_1} \\ 
		\Re(AD) &\geq 0 \label{eq:spi:lmi_accr_2} \\
		\kappa\Re(D) &\geq \Im(D) \label{eq:spi:lmi_accr_3}\\
		\kappa\Re(D) &\geq -\Im(D) \label{eq:spi:lmi_accr_4}
	\end{align}
	for some $D\in\mathbf{D}_\chi$.
\end{lemma}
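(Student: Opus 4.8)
The plan is to mirror the proof of Lemma~\ref{lemma:spr:lmi_pdef}, but now accommodating a $D$-scaling matrix with nonzero phase index. First I would translate the four matrix inequalities into phase conditions on $D$ and $AD$. By \ref{prop:phase:Wreal}, the constraint \eqref{eq:spi:lmi_accr_1}, $\Re(D)>0$, is equivalent to $W(D)$ lying in the open right half plane, which makes $D$ invertible and forces $0\notin W(D)$; \eqref{eq:spi:lmi_accr_2}, $\Re(AD)\geq 0$, is equivalent to $W(AD)$ lying in the closed right half plane, i.e. $AD$ semi-sectorial with $\phi(AD)\leq\pi/2$; and, since \eqref{eq:spi:lmi_accr_1} already gives $\Re(D)\geq 0$, Lemma~\ref{lemma:phase:sector_lmi} shows that \eqref{eq:spi:lmi_accr_3}--\eqref{eq:spi:lmi_accr_4} are equivalent to $D$ being quasi-sectorial with $\phi(D)\leq\arctan\kappa$. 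Thus feasibility of the LMIs is equivalent to the existence of an invertible $D\in\mathbf{D}_\chi$ with $\Re(D)>0$, $\phi(D)\leq\arctan\kappa$ and $\phi(AD)\leq\pi/2$.

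For sufficiency, I would take such a $D$ and verify the defining constraints \eqref{eq:spi:D_constraint}--\eqref{eq:spi:AD_constraint} for every $\alpha>\pi/2+\arctan\kappa$: indeed $\phi(D)\leq\arctan\kappa<\alpha-\pi/2$ gives \eqref{eq:spi:D_constraint}, while $\phi(AD)+\phi(D)\leq\pi/2+\arctan\kappa<\alpha$ gives \eqref{eq:spi:AD_constraint}, and $\Re(D)>0$ supplies the invertibility required in \eqref{eq:spi:opsi}. Hence every such $\alpha$ is admissible and $\overline{\psi}_\chi(A)\leq\pi/2+\arctan\kappa$.

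Necessity is where the real work lies, and I expect it to be the main obstacle. Unlike the positive-definite case of Lemma~\ref{lemma:spr:lmi_pdef}, here $\phi(D)$ need not vanish, so the defining constraint \eqref{eq:spi:AD_constraint} couples $\phi(AD)$ and $\phi(D)$ through their sum, whereas the LMIs decouple them into the separate bounds $\phi(AD)\leq\pi/2$ and $\phi(D)\leq\arctan\kappa$. A witness $D$ extracted from $\overline{\psi}_\chi(A)\leq\pi/2+\arctan\kappa$ satisfies only $\phi(D)\leq\arctan\kappa$ and $\phi(AD)+\phi(D)<\pi/2+\arctan\kappa$, and may well have $\phi(AD)>\pi/2$, so it need not satisfy the LMIs directly. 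The remedy is to redistribute the phase budget by replacing $D$ with the rotated scaling $D'=e^{j\delta}D$, which remains in $\mathbf{D}_\chi$ and, by \ref{prop:phase:Pscalar}, shifts every phase of both $D$ and $AD$ by $\delta$.

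I would then show a suitable $\delta$ exists. Writing $\beta=\arctan\kappa$, the two requirements $[\uphi(D)+\delta,\ophi(D)+\delta]\subseteq[-\beta,\beta]$ and $[\uphi(AD)+\delta,\ophi(AD)+\delta]\subseteq[-\pi/2,\pi/2]$ each determine an interval of admissible $\delta$; these intervals are individually nonempty because $\Theta(D)\leq 2\beta$ and, since $\phi(AD)<\pi$ forces $0\notin\operatorname{int}W(AD)$ and hence $AD$ semi-sectorial, $\Theta(AD)\leq\pi$. Their intersection is nonempty precisely because the two cross-inequalities $\ophi(AD)-\uphi(D)\leq\phi(AD)+\phi(D)<\pi/2+\beta$ and $\ophi(D)-\uphi(AD)\leq\phi(D)+\phi(AD)<\pi/2+\beta$ hold. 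For such a $\delta$ the rotated $D'$ has $\phi(D')\leq\beta$ and $\phi(AD')\leq\pi/2$; combined with invertibility, $\phi(D')<\pi/2$ rules out $0\in W(D')$ (a sharp apex of the sector would be a normal eigenvalue, contradicting $|D'|\neq0$), giving $\Re(D')>0$. Translating back through Lemma~\ref{lemma:phase:sector_lmi} and \ref{prop:phase:Wreal}, $D'$ satisfies \eqref{eq:spi:lmi_accr_1}--\eqref{eq:spi:lmi_accr_4}. The only remaining care is the boundary case $\overline{\psi}_\chi(A)=\pi/2+\arctan\kappa$, where the witness is available only for $\alpha$ strictly above the infimum; this I would settle by a limiting argument, using that \eqref{eq:spi:lmi_accr_2}--\eqref{eq:spi:lmi_accr_4} are non-strict and that the admissible scalings may be taken from a compact set.
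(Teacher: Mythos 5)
Your sufficiency direction is essentially the paper's entire proof: the paper reads \eqref{eq:spi:lmi_accr_3}--\eqref{eq:spi:lmi_accr_4} through Lemma \ref{lemma:phase:sector_lmi} as $\phi(D)\leq\arctan\kappa$, observes that \eqref{eq:spi:lmi_accr_1} is then precisely invertibility of $D$, and that \eqref{eq:spi:lmi_accr_2} gives $\phi(AD)\leq\pi/2$ (the paper cites \eqref{eq:spi:lmi_accr_4} here, but that is a typo for \eqref{eq:spi:lmi_accr_2}), so that \eqref{eq:spi:D_constraint}--\eqref{eq:spi:AD_constraint} hold for every $\alpha>\pi/2+\arctan\kappa$ and the infimum in \eqref{eq:spi:opsi} is bounded as claimed. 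Where you genuinely depart from the paper is the converse. The paper's proof never really argues it: its claim that the LMIs are ``equivalent'' to \eqref{eq:spi:D_constraint} holds only at $\alpha=\pi/2+\arctan\kappa$ exactly, and for \eqref{eq:spi:AD_constraint} the paper itself only claims sufficiency, so the ``only if'' half is left hanging. You correctly isolated the obstruction --- the definition of $\overline{\psi}_\chi$ couples $\phi(AD)+\phi(D)$ while the LMIs decouple the two bounds --- and your rotation $D'=e^{j\delta}D$, with the interval-intersection check driven by the cross-quantities $\ophi(AD)-\uphi(D)$ and $\ophi(D)-\uphi(AD)$, is a sound way to redistribute the phase budget; it is exactly the ingredient the paper is missing, and it parallels the paper's own rotation-by-$e^{-j\gamma(D)}$ remark after Lemma \ref{lemma:spi:D_multipliers}. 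Your handling of $\Re(D')>0$ via the normal-eigenvalue property of quasi-sectorial matrices is also correct.

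One caveat: your closing limiting argument for the boundary case $\overline{\psi}_\chi(A)=\pi/2+\arctan\kappa$ does not go through as sketched. Normalizing $\|D_k\|=1$ and extracting a convergent subsequence preserves the non-strict inequalities \eqref{eq:spi:lmi_accr_2}--\eqref{eq:spi:lmi_accr_4}, but not the strict one \eqref{eq:spi:lmi_accr_1}: the limit can be a singular element of $\mathbf{D}_\chi$ (compactness only guarantees it is nonzero), exactly as the infimum in a $\mu$-type $D$-scaling problem need not be attained. So what your argument actually establishes is necessity whenever $\overline{\psi}_\chi(A)<\pi/2+\arctan\kappa$ strictly, plus the full equivalence for every larger $\kappa$; at exact equality the claim remains unproven. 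This is a defect of the lemma as stated rather than of your approach --- the paper's proof does not address it at all --- but if you want a clean statement you should either weaken the ``only if'' to the strict-inequality case or add a hypothesis ensuring attainment of the infimum.
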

\begin{proof}
	\eqref{eq:spi:lmi_accr_3}-\eqref{eq:spi:lmi_accr_4} are by Lemma \ref{lemma:phase:sector_lmi} equivalent to $\phi(D)\leq \arctan\kappa$, and so are equivalent to \eqref{eq:spi:D_constraint} for $\alpha\geq\pi/2+\arctan\kappa$. Given these phase constraints, \eqref{eq:spi:lmi_accr_1} is then equivalent to the invertibility of $D$. \eqref{eq:spi:lmi_accr_4} then implies $\phi(AD)\leq\pi/2$, which is sufficient to prove \eqref{eq:spi:AD_constraint} for $\alpha>\pi/2+\arctan\kappa$.
\end{proof}

Algorithm \ref{alg1} summarizes the two-step procedure for calculating $\overline\psi_\chi(A)$. 
\begin{algorithm}[h!] 
	\caption{Algorithm for calculating $\overline\psi_\chi$}
	\label{alg1}
	\begin{algorithmic}
		\Require $A\in\C^{n\times n}$, $\chi$.
		
		\If{Lemma \ref{lemma:spr:lmi_pdef} is feasible}
		$\overline\psi_\chi\gets \arctan\kappa$
		\ElsIf{Lemma \ref{lemma:spr:lmi_accr} is feasible}
		$\overline\psi_\chi\gets \pi/2 + \arctan\kappa$
		\Else \ 
		$\overline\psi_\chi\gets\pi$
		\EndIf
	\end{algorithmic}
\end{algorithm}

To complement the upper bound $\overline\psi_\chi$, we desire a lower bound. This does not provide a sufficient condition for stability, but does allow a bound to be placed on the conservatism of $\overline\psi_\chi$. If we consider $B=\frac{1}{\lambda}I_n$, where $\lambda\in\sigma(A)$, the largest absolute phase of $A$'s eigenvalues gives a lower bound:
\begin{align}    \label{eq:low_bound}
	\psi_\chi(A)\geq\max_{\lambda\in\sigma(A)} |\angle\lambda|
\end{align}
This lower bound can generally be arbitrarily optimistic. To address this, we introduce the following:
\begin{align} \label{eq:spi:upsi}
	\underline\psi_\chi(A) := \max_{X\in\mathbf{B}_\chi} \{|\angle\lambda| \: : \: \lambda\in\sigma(XAX^*)\}.
\end{align}
To see that this is indeed a lower bound, take $B=\lambda^{-1}X^*X$, which satisfies $\phi(B)\leq\underline\psi_\chi(A)$.

The optimization problem contained in \eqref{eq:spi:upsi} is non-convex, and can have multiple local maxima. A detailed study of the properties of this optimization problem is outside of the scope of this paper. We will present an objective function and gradient that can be used with numerical optimization routines to obtain a potentially non-optimal lower bound. We parameterize the scaling matrix and the scaled version of $A$ as:
\begin{align}
	X(x) &:= x_1 X_1 + \dots x_k X_k,\\
	\tilde{A}(x) &:= X(x)AX(x),
\end{align}
where $x=\bmat x_1&\dots&x_k\emat^\T\in\R^k$ and $X_1,\dots,X_k$ form a basis for all Hermitian matrices in $\mathbf{B}_\chi$. We first define:
\begin{align}
	\lambda_\star(x) := \operatorname*{argmax}_{\lambda\in\sigma(\tilde{A}(x))} |\angle\lambda|.
\end{align}
The objective function to maximize is then:
\begin{align} \label{eq:spi:upsi_obj}
	f(x) := \max_{x\in\R^k} |\angle\lambda_\star(x)|.
\end{align}
We will now find an expression for the gradient of $f$ in the case that $\lambda_\star(x)$ is unique and a simple eigenvalue of $\tilde{A}(x)$. In this case, we may apply Theorem \ref{thm:maths:d_eig} to obtain the partial derivative of $\lambda_\star$ with respect to each element of $x$:
\begin{align} \label{eq:grad_partials}
	\frac{\partial\lambda_\star}{\partial x_i} &= \frac{v(x)^* \tilde{A}(x) (X(x)X_i + X_i X(x)) u(x)} {v(x)^* u(x)},
\end{align}
for $i=1,\dots,k$, where $u(x),v(x)\in\C^{n}$ are the right and left eigenvectors of $\tilde{A}(x)$ associated with $\lambda_\star(x)$. We may write down the following from the partial derivatives of $\operatorname{arctan2}$:
\begin{align}
	\frac{\partial f}{\partial \Re\lambda_\star} &= -\operatorname{sgn}(|\angle\lambda_\star|) \frac{\Im(\lambda_\star)}{|\lambda_\star|^2},\\
	\frac{\partial f}{\partial \Im\lambda_\star} &= +\operatorname{sgn}(|\angle\lambda_\star|) \frac{\Re(\lambda_\star)}{|\lambda_\star|^2}.
\end{align}
Then, each component of the gradient may be written:
\begin{align} \label{eq:spi:upsi_grad}
	\frac{\partial f}{\partial x_i} &= \frac{\partial f}{\partial \Re\lambda_\star}\Re \frac{\partial \lambda_\star}{\partial x_i} + \frac{\partial f}{\partial \Im\lambda_\star} \Im\frac{\partial \lambda_\star}{\partial x_i},
\end{align}
for $i=1,\dots,k$.

MATLAB's \texttt{fminunc} may  be used to obtain the estimates of $\underline\psi$, with the objective function $f$ and its gradient in \eqref{eq:grad_partials},  and using $X_c=I_n$ as the starting point and \texttt{trust-region} as the algorithm.

\section{Main Results} \label{sec:main}

We are now able to  present  our the main  results of   on  the stability of LTI  feedback interconnected systems subject  to structured perturbations. Our main  result shows  that if the frequency response of the plant and perturbation respect  the phase bound  at  all  frequencies, an IQC can be formulated that guarantees the stability of the interconnection, and hence the admissibility of the perturbation. 

Our first result provides our solution to Problem 1: 

\begin{theorem} \label{thm:spr:phase_stability}
	Consider a well-posed feedback system $(G, \Delta)$, where $G, \Delta \in\RHi^{n\times n}$. Let $\chi=(\mathbf{n},	\mathbf{m})$ be a pair of block dimension tuples compatible with $G$. If $\Delta\in\mathbf{\Delta}_\chi$ and:
	\begin{align} \label{eq:spr:system}
		\phi(\Delta(j\omega))< \pi - \overline{\psi}(G(j\omega)),
	\end{align}
	for all $\omega\in[0,\infty]$, then the interconnection $(G,\Delta)$ is stable.
\end{theorem}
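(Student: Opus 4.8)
\emph{The plan is} to reduce the operator-level stability claim to the frequency-wise nonsingularity already supplied by the upper bound $\overline{\psi}$, and then to lift this pointwise statement to $\RHi$ through a homotopy argument (equivalently, through the IQC stability theorem alluded to in the paper). Recall that the interconnection is stable if and only if $(I+G\Delta)^{-1}\in\RHi^{n\times n}$, and that since $G,\Delta\in\RHi^{n\times n}$ the rational function $d(s):=|I+G(s)\Delta(s)|$ has no poles in the closed right half plane. By the argument principle, $(I+G\Delta)^{-1}\in\RHi^{n\times n}$ will follow once I show that $d$ has no zeros in the closed RHP, i.e. that the Nyquist locus $\omega\mapsto d(j\omega)$ neither vanishes nor encircles the origin.

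First I would extract the pointwise ingredient. Fix $\omega\in[0,\infty]$; since $\Delta\in\mathbf{\Delta}_\chi$ we have $\Delta(j\omega)\in\mathbf{B}_\chi$, and hypothesis \eqref{eq:spr:system} reads $\overline{\psi}(G(j\omega))+\phi(\Delta(j\omega))<\pi$. By the defining property of $\overline{\psi}$ (Lemma \ref{lemma:spi:D_multipliers} together with the remark immediately following the definition of $\overline{\psi}_\chi$), this inequality already forces $|I+G(j\omega)\Delta(j\omega)|\neq 0$. Because $\mathbf{B}_\chi$ is closed under multiplication by a nonnegative scalar and $\phi(\tau\Delta(j\omega))=\phi(\Delta(j\omega))$ for every $\tau>0$ by \ref{prop:phase:Wscalar}, the identical argument yields $|I+\tau G(j\omega)\Delta(j\omega)|\neq 0$ for all $\tau\in(0,1]$, while $\tau=0$ is trivial. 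Hence the homotopy $d_\tau(s):=|I+\tau G(s)\Delta(s)|$ has $d_\tau(j\omega)\neq 0$ for every $\omega\in[0,\infty]$ and every $\tau\in[0,1]$.

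I would then close the argument by homotopy invariance of the winding number. Each $d_\tau$ is rational with no RHP poles, and $(\tau,\omega)\mapsto d_\tau(j\omega)$ is continuous and nonvanishing on the compact set $[0,1]\times[0,\infty]$; using conjugate symmetry of the real-rational data, the number of encirclements of the origin by $\omega\mapsto d_\tau(j\omega)$ is an integer depending continuously on $\tau$, hence constant. At $\tau=0$ we have $d_0\equiv 1$ with encirclement count zero, so $d_1=|I+G\Delta|$ also has encirclement count zero and, by the argument principle, no open-RHP zeros. Well-posedness furnishes the remaining fact that $I+G(j\infty)\Delta(j\infty)$ is nonsingular, ruling out a zero at infinity and ensuring $(I+G\Delta)^{-1}$ is proper; together these give $(I+G\Delta)^{-1}\in\RHi^{n\times n}$. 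The same conclusion may be phrased through the IQC theorem: the frequency-wise $D$-scalings underlying $\overline{\psi}$ induce a multiplier for which $\Delta$ and each $\tau\Delta$ satisfy the associated IQC while $G$ satisfies the complementary frequency-domain inequality, and the well-posedness/homotopy hypotheses of that theorem are precisely those verified above.

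The main obstacle is the passage from the purely frequency-wise, constant-matrix phase conditions to a statement about the $\RHi$ transfer matrix. Pointwise nonsingularity of $I+G(j\omega)\Delta(j\omega)$ is by itself insufficient for stability, as it does not prevent the Nyquist locus from encircling the origin; the homotopy in $\tau$, legitimised by the scale invariance $\phi(\tau\Delta)=\phi(\Delta)$, is what pins the encirclement count to zero and is therefore the crux of the proof. If one instead insists on the literal IQC route, the delicate step becomes the assembly of the frequency-dependent $D$-scalings into a single bounded multiplier admitting a suitable canonical factorisation, a technical difficulty that the homotopy argument sidesteps.
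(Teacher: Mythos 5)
Your proof is correct, and it takes a genuinely different route from the paper's. The paper argues entirely inside the IQC framework: for each frequency it selects a scaling $D(\omega)\in\mathbf{D}_\chi$ witnessing the bound \eqref{eq:spi:opsi} for $G(j\omega)$, uses Lemma \ref{lemma:spi:DeltaD} and property \ref{prop:phase:Pscalar} to find a rotation $\beta(\omega)\in(-\pi/2,\pi/2)$ with $\phi(e^{j\beta(\omega)}G(j\omega)D(\omega))<\pi/2$ and $\phi(e^{-j\beta(\omega)}D(\omega)^{-1}\Delta(j\omega))\leq\pi/2$, packages these into an off-diagonal multiplier $\Pi(j\omega)$, verifies the two frequency-domain inequalities (for $\Delta$ a positive-realness condition, for $G$ the relaxed inequality with right-hand side $-\epsilon G(j\omega)^*G(j\omega)$, using quasi-sectoriality and compactness of $[0,\infty]$ to obtain a uniform $\epsilon>0$), and then invokes the IQC stability theorem (Theorem \ref{theorem:IQC}). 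You instead take from Section \ref{sec:spi} only the constant-matrix implication that $\overline{\psi}_\chi(A)+\phi(B)<\pi$ with $B\in\mathbf{B}_\chi$ forces $|I+AB|\neq 0$ (exactly what the paper asserts after the definition of $\overline{\psi}_\chi$), combine it with positive homogeneity ($\tau\Delta(j\omega)\in\mathbf{B}_\chi$ and $\phi(\tau\Delta(j\omega))=\phi(\Delta(j\omega))$ for $\tau>0$), and close with a Nyquist winding-number homotopy in $\tau$. Both mechanisms are sound, and your homotopy is the crucial step that correctly converts frequency-wise nonsingularity into absence of closed-RHP zeros. Your route is more elementary and self-contained for the rational LTI setting of the theorem; notably, it sidesteps a technicality the paper leaves implicit, namely that Theorem \ref{theorem:IQC} requires a multiplier $\Pi\in\RLi^{2n\times 2n}$, while the pointwise-chosen $D(\omega)$ and $\beta(\omega)$ are not guaranteed to be measurable, continuous, or rational in $\omega$, so strictly speaking the paper's construction needs an additional approximation argument. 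What the paper's IQC construction buys in exchange is machinery: the same multiplier bookkeeping is what allows Theorem \ref{thm:spr:mixed_stability} to patch phase-type and gain-type multipliers over the frequency sets $\Omega_\psi$ and $\Omega_\mu$, and it is the natural language for extensions beyond rational LTI perturbations, where your argument-principle reasoning would no longer apply.
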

\begin{proof}
	Define the function $D:[0,\infty]\rightarrow \C^{n\times n}$, where $D(\omega)$ is equal to some matrix in $\mathbf{D}_\chi$ satisfying the condition of the supremum in \eqref{eq:spi:opsi} with $A=G(j\omega)$ and $\alpha=\alpha(\omega)$ satisfying:
	\begin{align}
		\phi(\Delta(j\omega)) \leq \alpha(\omega) < \pi - \overline\psi(G(j\omega)).
	\end{align}
	It follows from \eqref{eq:spi:D_constraint}-\eqref{eq:spi:AD_constraint} that:
	\begin{align}
		\phi(G(j\omega)D(\omega)) + \phi(D(\omega)) < \pi - \phi(\Delta(j\omega)),
	\end{align}
	and then by Lemma \ref{lemma:spi:DeltaD} that:
	\begin{align}
		\phi(G(j\omega)D(\omega)) + \phi(D(\omega)^{-1}\Delta(j\omega)) < \pi.
	\end{align}
	It then follows from \ref{prop:phase:Pscalar} that there exists $\beta(\omega)\in(-\pi/2,\pi/2)$ so that:
	\begin{align}
		\phi(e^{j\beta(\omega)}G(j\omega)D(\omega)) < \pi/2, \label{eq:result:phase_GD}\\
		\phi(e^{-j\beta(\omega)}D(\omega)^{-1}\Delta(j\omega)) \leq \pi/2. \label{eq:result:phase_DDelta}
	\end{align}
	Then define the multiplier function:
	\begin{align}
		\Pi(j\omega) &= \bmat 0 & -e^{-j\beta(\omega)}D(\omega)^{-1} \\ -e^{j\beta(\omega)}D(\omega)^{-*} & 0 \emat
	\end{align}
	$\Delta$ satisfies the IQC defined by $\Pi$ if and only if the frequency domain inequality (FDI) is satisfied:
	\begin{align}
		\bmat I \\ -\Delta(j\omega) \emat^* \Pi(j\omega) \bmat I \\ -\Delta(j\omega) \emat &\geq 0, \\
		\iff \Re(e^{-j\beta(\omega)}D(\omega)^{-1}\Delta(j\omega)) &\geq 0
	\end{align}
	which is indeed guaranteed by \eqref{eq:result:phase_DDelta}. The corresponding FDI for $G$ is:
	\begin{align}
		\bmat G(j\omega) \\ I \emat^* \Pi(j\omega) \bmat G(j\omega) \\ I \emat &\leq -\epsilon G(j\omega)^*G(j\omega), \\
		\Re(e^{j\beta(\omega)}G(j\omega)D(\omega)) &\geq -\epsilon D(\omega)^*G(j\omega)^* G(j\omega)D(\omega).
	\end{align}
	Note that by \eqref{eq:result:phase_GD}, $G(j\omega)D(\omega)$ is quasi-sectorial. By taking a decomposition of the form \eqref{eq:phase:quasi} into orthogonal sectorial and zero parts, we can see that the inequality is satisfied for some $\epsilon>0$ due to the phase restriction implied by \eqref{eq:result:phase_GD} on the sectorial part. Further, because the phase condition is satisfied on a closed set, we may choose a single $\epsilon>0$ that satisfies the inequality for all $\omega\in[0,\infty]$.	Therefore, the requirements of Theorem \ref{theorem:IQC} are satisfied, and $(G,\Delta)$ is a stable interconnection.
\end{proof}

We recall the robust stability condition provided by the structured singular value $\mu$.
\begin{theorem}\cite{zhou1996}
	Any interconnection $(G,\Delta)$ with $\Delta\in\mathbf{\Delta}_\chi$ for a compatible block dimension $\chi$ satisfying:
	\begin{align} 
		\mu_\chi(G(j\omega)) \cdot \| \Delta(j\omega) \| < 1\, \: \forall \omega\in[0,\infty], \label{eq:problem:mu_sgt}
	\end{align}
	is well-posed and stable.
\end{theorem}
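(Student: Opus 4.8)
The plan is to reduce the stability question to a determinant nonvanishing condition and then close the argument with a homotopy (winding-number) deformation, which is the standard route for robust stability results of $\mu$ type. Recall from Section \ref{sec:problem} that $(G,\Delta)$ is stable if and only if $(I+G\Delta)^{-1}\in\RHi^{n\times n}$. Since $G,\Delta\in\RHi^{n\times n}$, the map $s\mapsto |I+G(s)\Delta(s)|$ is a rational function with no poles in the closed right half plane (including $s=\infty$), and $(I+G\Delta)^{-1}\in\RHi$ holds precisely when this determinant is nonzero throughout the closed RHP. Thus the whole theorem reduces to showing that \eqref{eq:problem:mu_sgt} forces $|I+G(s)\Delta(s)|\neq 0$ for every $s$ in the closed RHP, with well-posedness being the $s=\infty$ instance.

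The first key step is to restate \eqref{eq:problem:mu_sgt} in a usable form. Unwinding Definition \ref{def:mu}, the inequality $\mu_\chi(A)\,\beta<1$ is equivalent to $\beta<\min\{\|B\|:|I+AB|=0,\,B\in\mathbf{B}_\chi\}$, i.e.\ to the assertion that $|I+AB|\neq 0$ for every $B\in\mathbf{B}_\chi$ with $\|B\|\leq\beta$. Applying this at each frequency with $A=G(j\omega)$ and $\beta=\|\Delta(j\omega)\|$ shows that \eqref{eq:problem:mu_sgt} is equivalent to $|I+G(j\omega)B|\neq 0$ for all $\omega\in[0,\infty]$ and all $B\in\mathbf{B}_\chi$ with $\|B\|\leq\|\Delta(j\omega)\|$. (The negative frequency axis needed for the Nyquist contour is handled automatically by conjugate symmetry, since $G,\Delta$ are real-rational and hence $|I+G(-j\omega)\Delta(-j\omega)|=\overline{|I+G(j\omega)\Delta(j\omega)|}$.)

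The second step is the homotopy. I would scale the perturbation by $\tau\in[0,1]$, noting that $\tau\Delta(j\omega)\in\mathbf{B}_\chi$ because scalar multiplication preserves the block structure, and that $\|\tau\Delta(j\omega)\|\leq\|\Delta(j\omega)\|$. Hence the restated condition guarantees $|I+G(j\omega)\,\tau\Delta(j\omega)|\neq 0$ for every $\omega$ and every $\tau\in[0,1]$; along the entire family the Nyquist plot $\omega\mapsto|I+G(j\omega)\tau\Delta(j\omega)|$ avoids the origin. The number of closed-RHP zeros of $|I+G(s)\tau\Delta(s)|$ equals the net number of encirclements of the origin by this plot (the argument principle applied on the Nyquist contour, there being no poles in the closed RHP), which is an integer depending continuously on $\tau$ so long as the curve never meets the origin. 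Since at $\tau=0$ the determinant is identically $1$ and the encirclement count is $0$, it remains $0$ for all $\tau$, in particular at $\tau=1$. Therefore $|I+G(s)\Delta(s)|$ has no zeros in the closed RHP, giving $(I+G\Delta)^{-1}\in\RHi$ and hence stability.

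The main obstacle is making the winding-number bookkeeping fully rigorous at the boundary and at infinity: one must verify that $|I+G(s)\tau\Delta(s)|$ is analytic and nonvanishing at $s=\infty$ (exactly well-posedness, furnished by the $\omega=\infty$ case of the restated condition) so that the Nyquist contour closes up and the argument principle applies cleanly, and one must confirm that the encirclement count is genuinely continuous in $\tau$ under uniform avoidance of the origin. Both are standard but require care with the behaviour at infinity and with compactness of $[0,\infty]\times[0,1]$, which ensures the determinant is bounded away from zero uniformly and hence that no closed-RHP zero can cross the imaginary axis during the deformation.
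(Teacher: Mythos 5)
Your proof is correct. Note that the paper offers no proof of this statement at all --- the theorem is quoted from \cite{zhou1996} --- and your argument is essentially the standard textbook proof from that reference: unwind Definition \ref{def:mu} into the frequency-wise determinant condition $|I+G(j\omega)B|\neq 0$ for all structured $B$ with $\|B\|\leq\|\Delta(j\omega)\|$, observe that $\tau\Delta(j\omega)$ stays in $\mathbf{B}_\chi$ with no larger norm for $\tau\in[0,1]$, and then use constancy of the integer winding number of $|I+G(j\omega)\tau\Delta(j\omega)|$ (nonvanishing uniformly on the compact set $[0,\infty]\times[0,1]$, with the $\omega=\infty$ case supplying well-posedness) to conclude via the argument principle that $(I+G\Delta)^{-1}\in\RHi^{n\times n}$. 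One comparison worth making with the paper's own machinery: the IQC stability result (Theorem \ref{theorem:IQC}) with constant multipliers built from $D$-scalings, exactly as used for the $\Omega_\mu$ frequencies in the proof of Theorem \ref{thm:spr:mixed_stability}, gives an alternative route to this kind of statement, but it can only certify the weaker version with the upper bound $\overline{\mu}_\chi$ in place of $\mu_\chi$, since $D$-scalings witness $\overline{\mu}_\chi$ rather than the exact structured singular value. For the theorem as stated, with exact $\mu_\chi$, the homotopy/Nyquist route you took is the appropriate one, and the boundary-at-infinity and continuity-of-winding-number caveats you flag are resolved exactly by the compactness argument you describe.
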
 

Our second  main result  combines Theorems 1  and 2 to provide a   mixed  gain  and phase stability criterion, providing a solution to Problem 2.
\begin{theorem} \label{thm:spr:mixed_stability}
	Consider a well-posed feedback system $(G, \Delta)$, where $G, \Delta \in\RHi^{n\times n}$.   Let $\chi=(\mathbf{n},\mathbf{m})$ be a pair of block dimension  tuples compatible with $G$, and let $\mathbf{\Delta}_\chi$ be   a set of  structured perturbations,  with  $\Delta\in\mathbf{\Delta}_\chi$. If  there exist closed sets $\Omega_\psi,\Omega_\mu\subseteq[0,\infty]$ so that $\Omega_\psi\cup\Omega_\mu=[0,\infty]$ and
	\begin{align} 
		\phi(\Delta(j\omega)) + \overline{\psi}_\chi(G(j\omega))&<\pi \: \forall \omega\in\Omega_\psi, \label{eq:spr:system}\\
		\| \Delta(j\omega) \| \cdot \overline{\mu}_\chi(G(j\omega)) &< 1\: \forall \omega\in\Omega_\mu, \label{eq:mu:system}
	\end{align}
	 then  the interconnection $(G,\Delta)$ is stable.
\end{theorem}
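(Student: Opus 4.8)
The plan is to reprise the IQC argument used to prove Theorem \ref{thm:spr:phase_stability}, but to build a single frequency-dependent Hermitian multiplier $\Pi(j\omega)$ that coincides with a phase-type multiplier on $\Omega_\psi$ and with a gain-type ($D$-scaled small gain) multiplier on $\Omega_\mu$. The aim is to exhibit, at every $\omega\in[0,\infty]$, a multiplier under which $\Delta(j\omega)$ satisfies its frequency-domain inequality (FDI) while $G(j\omega)$ satisfies the complementary coercive FDI, so that the same IQC result, Theorem \ref{theorem:IQC}, may be invoked to conclude stability. Since $\Omega_\psi\cup\Omega_\mu=[0,\infty]$, each frequency is covered by at least one of the two conditions, and it is precisely this covering that lets the two criteria be combined on a per-frequency basis.

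First I would assemble the two ingredient multipliers. On $\Omega_\psi$, hypothesis \eqref{eq:spr:system} is exactly the assumption of Theorem \ref{thm:spr:phase_stability}, so the construction in that proof supplies, for each such $\omega$, an invertible $D(\omega)\in\mathbf{D}_\chi$ and $\beta(\omega)\in(-\pi/2,\pi/2)$ giving the off-diagonal multiplier $\Pi_\psi(j\omega)=\bsmat 0 & -e^{-j\beta}D^{-1}\\ -e^{j\beta}D^{-*} & 0\esmat$, under which both the $\Delta$-FDI and the coercive $G$-FDI hold. On $\Omega_\mu$, hypothesis \eqref{eq:mu:system} together with $\overline{\mu}_\chi(G)=\inf_D\|DGD^{-1}\|$ furnishes a $D(\omega)\in\mathbf{D}_\chi$ with $\|DGD^{-1}\|\,\|\Delta\|<1$; using that every $D\in\mathbf{D}_\chi$ commutes with every element of $\mathbf{B}_\chi$, I would verify that the diagonal multiplier $\Pi_\mu(j\omega)=\bsmat D^*D & 0\\ 0 & -\overline{\mu}_\chi(G)^2 D^*D\esmat$ (slightly inflated to exploit the strict inequality) makes $\Delta(j\omega)$ satisfy its FDI, since this reduces to $D^*(I-\overline{\mu}_\chi(G)^2\Delta^*\Delta)D\geq0$, and makes $G(j\omega)$ satisfy a strictly negative FDI, since $G^*D^*DG=D^*(DGD^{-1})^*(DGD^{-1})D\leq\overline{\mu}_\chi(G)^2 D^*D$. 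These are short computations rather than genuine obstacles.

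Next I would stitch the pieces together, setting $\Pi(j\omega)=\Pi_\psi(j\omega)$ for $\omega\in\Omega_\psi$ and $\Pi(j\omega)=\Pi_\mu(j\omega)$ for $\omega\in\Omega_\mu\setminus\Omega_\psi$. Because $\Omega_\psi\cup\Omega_\mu=[0,\infty]$, $\Pi$ is defined at every frequency, and at each $\omega$ the $\Delta$-FDI and the strict $G$-FDI both hold, by whichever ingredient is valid on a set containing $\omega$.

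The main obstacle is regularity and uniformity of this stitched multiplier, since Theorem \ref{theorem:IQC} asks for a bounded multiplier together with a single coercivity margin $\epsilon>0$ holding at all frequencies rather than merely pointwise. Boundedness and measurability I would obtain by selecting $D(\omega)$ measurably on each piece and noting that $\Pi_\psi$ and $\Pi_\mu$ are then bounded there. For the uniform margin, I would argue exactly as at the end of the proof of Theorem \ref{thm:spr:phase_stability}: because $\Omega_\psi$ is closed and $[0,\infty]$ is compact, the coercive $G$-FDI holds over $\Omega_\psi$ with a single $\epsilon_\psi>0$, and the same compactness argument gives a single $\epsilon_\mu>0$ over $\Omega_\mu$; taking $\epsilon=\min(\epsilon_\psi,\epsilon_\mu)>0$ then yields one margin valid on all of $[0,\infty]$. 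The delicate point to check carefully is that the scaling matrices $D(\omega)$ and their inverses remain uniformly bounded across the switch between the two sets, so that $\Pi$ is genuinely a bounded multiplier; closedness of the two covering sets is what makes this possible, after which Theorem \ref{theorem:IQC} delivers stability of $(G,\Delta)$.
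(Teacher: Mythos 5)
Your proposal is correct and follows essentially the same route as the paper: a piecewise multiplier that equals the phase-type off-diagonal multiplier from Theorem \ref{thm:spr:phase_stability} on $\Omega_\psi$ and a $D$-scaled small-gain diagonal multiplier on $\Omega_\mu\setminus\Omega_\psi$ (yours differs from the paper's only by the positive scalar factor $\overline{\mu}_\chi(G)^2$, which is immaterial), with the same FDI verifications via $D$--$\Delta$ commutation and the same closed-set/compactness argument for a uniform $\epsilon>0$ before invoking Theorem \ref{theorem:IQC}. Your added attention to measurability and uniform boundedness of $D(\omega)$ is a point the paper glosses over, but it does not change the structure of the argument.
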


\begin{proof}
	We define a piecewise multiplier function $\Pi(j\omega): j\R\rightarrow \C^{2n\times 2n}$. For frequencies $\omega\in\Omega_\psi$, we define $D_\psi(\omega)$ and $\beta(\omega)$ in the same way as in Theorem \ref{thm:spr:phase_stability}, and:
	\begin{align}
		\Pi(j\omega) := \bmat 0 & -e^{-j\beta(\omega)}D_\psi(\omega)^{-1} \\ - e^{j\beta(\omega)}D_\psi(\omega)^{-*} & 0 \emat.
	\end{align}
	Clearly, the FDI on $\Delta$ and $G$ is satisfied at these frequencies. For frequencies $\omega\notin\Omega_\psi$, \eqref{eq:mu:system} is satisfied. At these frequencies, we define $D_\mu(\omega)$ to be a $D$-scaling matrix satisfying:
	\begin{align} \label{eq:result:mu_ineq}
		\| D_\mu(\omega) G(j\omega) D_\mu(\omega)^{-1} \| \leq \Delta(j\omega)^{-1} < \overline{\mu}(G(j\omega)).
	\end{align}
	At these frequencies, we then define:
	\begin{align}
		\Pi(j\omega) := \bmat \overline{\mu}(G(j\omega))^{-2}D(\omega)^*D(\omega) &0\\
		0& -D(\omega)^*D(\omega) \emat
	\end{align}
	From Theorem \ref{thm:spr:phase_stability}, it is clear that $\Delta$ and $G$ satisfy the required FDIs for $\omega\in\Omega_\psi$. At other frequencies, the FDI for $\Delta$ may be written:
	\begin{align}
		\bmat I \\ -\Delta(j\omega)\emat^* \Pi(j\omega) \bmat I \\ -\Delta(j\omega)\emat &\geq 0 \\
		I_n - \overline{\mu}(G(j\omega))^{2} \Delta(j\omega)^*\Delta(j\omega) &\geq 0
	\end{align}
	which is guaranteed by the condition on the singular values of $\Delta(j\omega)$.	Similarly, the FDI on $G$ at these frequencies is:
	\begin{align}
			\bmat G(j\omega) \\ I \emat ^*  \Pi(j\omega) \bmat G(j\omega) \\ I \emat &\leq -\epsilon G(j\omega)^*G(j\omega) \\
			I_n - \tfrac{D(\omega)^{-*}G(j\omega)^* D(\omega)^* D(\omega) G(j\omega) D(\omega)^{-1}}{\overline{\mu}(G(j\omega))^{2}}  &\geq \epsilon G(j\omega)^*G(j\omega)
	\end{align}
	which is guaranteed for all $\omega\in\Omega_\mu$ for some $\epsilon>0$ because \eqref{eq:result:mu_ineq} is satisfied over a closed set. Therefore, the FDI on both $G$ and $\Delta$ is satisfied at all frequencies, and the stability of $(G,\Delta)$ is proved.
\end{proof}

\section{Example} \label{sec:example}
\begin{figure}
	\centering
	\includegraphics{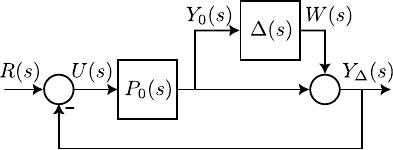}
	\caption{Block diagram of perturbed rotating body system.}
	\label{fig:ex:block}
\end{figure}

We present the application of the combined structured singular value ($\mu$) and structured phase radius ($\psi$) criteria to a simple system. We will additionally analyse the system with an approach combining the structured singular value with the relative passivity index ($R$, detailed in the Appendix), and show that the combination of $\mu$ and $\psi$ can identify a set of admissible perturbations that cannot be identified by a combination of $\mu$ and $R$. We examine the example of a rotating 3 dimensional body given in Chapter 8.6 of \cite{zhou1999}. The dynamics of the system are written in terms of the variables $y_1(t),y_2(t)\in\R$ with inputs $u_1(t),u_2(t)\in\R$. The dynamic equations of the nominal system are given by:
\begin{align}
	\bmat \dot{y}_1 \\ \dot{y}_2 \emat = \bmat 1 & a \\ -a & 1 \emat \bmat y_1 \\ y_2 \emat + \bmat u_1 \\ u_2 \emat,
\end{align}
where $a=(1-I_3/I_1)\Omega$. $I_3$ and $I_1$ are the moments of inertia in the $x$ and $z$ axes. The system may be represented in transfer function form as follows:
\begin{align}
	Y_0(s) = \underbrace{\frac{1}{s^2+a^2}\bmat s - a^2 & a(s+1) \\ -a(s+1) & s-a^2 \emat}_{P_0(s)} U(s)
\end{align}
where $Y_0(s)$ and $U(s)$ are the Laplace transforms of $\bmat y_1 & y_2\emat^\T$ and $\bmat u_1 & u_2 \emat^\T$ respectively, and $P_0(s)$ represents the nominal plant dynamics. The system is perturbed by a multiplicative perturbation $\Delta(s)\in\RHi^{2\times 2}$ as follows:
\begin{align}
	Y_\Delta(s) = (I+\Delta(s)) P(s)U(s) = Y_0(s) + \Delta(s)P(s)U(s).
\end{align}
The system, equipped with the (stabilizing)  control law $U(s)=-Y_\Delta(s)+R(s)$, is illustrated in Figure \ref{fig:ex:block}. The complementary sensitivity transfer function $T(s)$ defined by $Y_0(s)=T(s)W(s)$ is given by:
\begin{align}
	T(s) = \frac{1}{s+1} \bmat 1 & a \\ -a & 1 \emat.
\end{align}
The stability of the perturbed closed loop can be analysed by a feedback loop containing $T$ and $\Delta$ (i.e. we identify $G$ of Figure \ref{fig:iqc_feedback} with $T$.) We consider diagonal perturbations of the form:
\begin{align} \label{eq:Delta}
	\Delta(s) = \bmat 0.5 & 0 \\ 0 & \frac{0.25}{s/b + 1} \emat.
\end{align}
where $b>0$ is a parameter. This perturbation belongs to $\mathbf{\Delta}_\chi$ with $\chi=((), (1,1))$.

\begin{figure}
	\centering
	\includegraphics[width=0.4\textwidth]{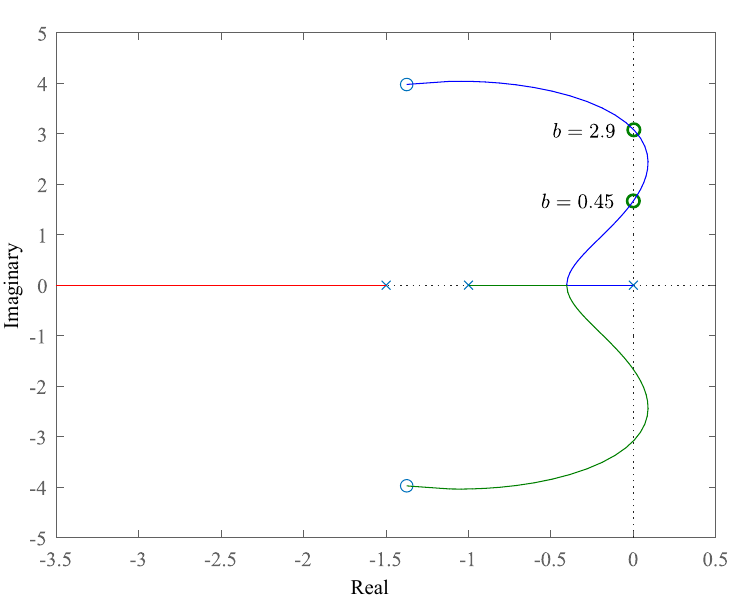}
	\caption{Root locus of perturbed system as $b$ varies from $0$ to $\infty$.}
	\label{fig:ex:rlocus}
\end{figure}

Because the perturbation $\Delta$ is parameterised by a single parameter $b$, we are able to  plot the root locus of the system in Figure \ref{fig:ex:rlocus}. The root locus plot shows that the system is unstable for values of $b\in[0.45,2.9]$. The class of perturbations \eqref{eq:metric:Delta} that we address cannot generally be analysed with a root locus plot--this example allows us to compare the performance of different stability analysis methods by comparing the size of the range of perturbations proven admissible by each method.

\begin{figure}
	\centering
	\includegraphics[width=0.4\textwidth]{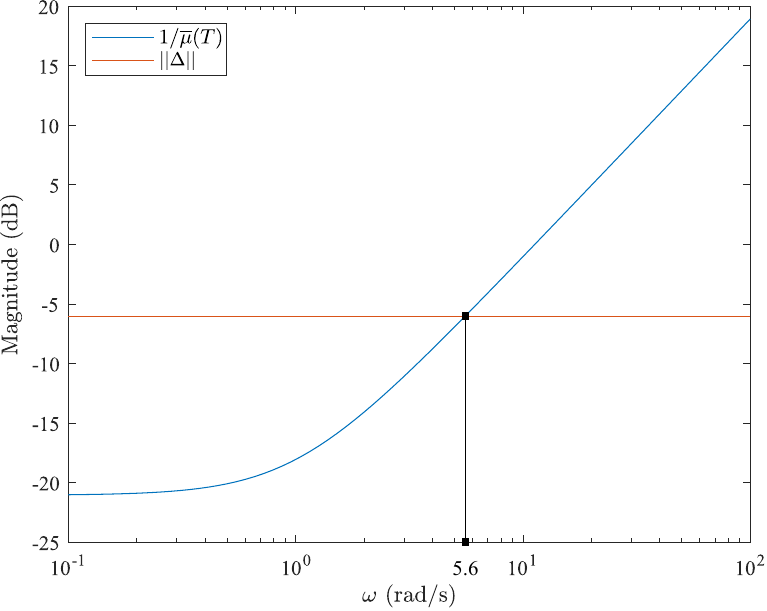}
	\caption{Structured singular value plot for $T$ and singular value plots for $\Delta$.}
	\label{fig:mu_analysis}
\end{figure}
Figure \ref{fig:mu_analysis} plots the reciprocal of $\overline\mu_\chi$ of $T(j\omega)$ (using MATLAB's \texttt{mussv}) and the largest singular value of $\Delta(j\omega)$ in accordance with \eqref{eq:mu:system}. Inspection of $\Delta(s)$ shows that $\| \Delta(j\omega) \|=0.5$ for all $\omega\geq 0$, and therefore independent of $b$. The singular value plots given in Figure \ref{fig:mu_analysis} show that the gain criterion \eqref{eq:mu:system} is satisifed for all frequencies above $5.6$ rad/s.

\begin{figure}
	\centering
	\includegraphics[width=0.4\textwidth]{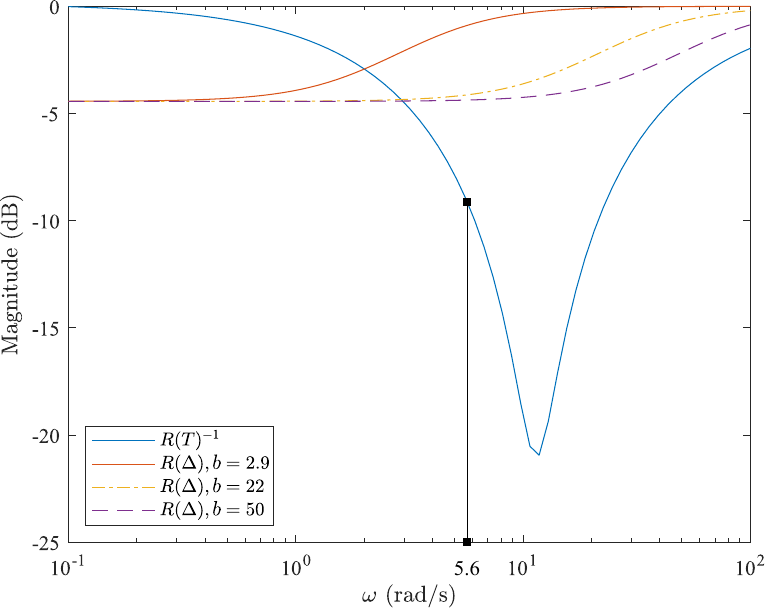}
	\caption{Structured relative passivity index plot for $T$ and relative passivity plot for $\Delta$.}
	\label{fig:R_analysis}
\end{figure}
The reciprocal of the $R$-value of $T(j\omega)$ and the $R$-value of $\Delta(j\omega)$ are plotted in Figure \ref{fig:R_analysis} for several values of $b$ in accordance with \eqref{eq:appendix:R_criterion}. By inspection, we can see that the criterion is satisfied for $\omega$ no greater than 2.9 rad/s regardless of the value of $b$ chosen, due to the asymptotic behaviour of the R-value of $T(j\omega)$ in the low frequency limit. Therefore, regardless of the value of $b$,   neither the structured singular value criterion nor the R-value criterion are satisfied for  $\omega\in(2.9,5.6)$, and stability cannot be concluded for any perturbation of the form \eqref{eq:Delta} on the basis of the combination of these two criteria.

\begin{figure}
	\centering
	\includegraphics[width=0.4\textwidth]{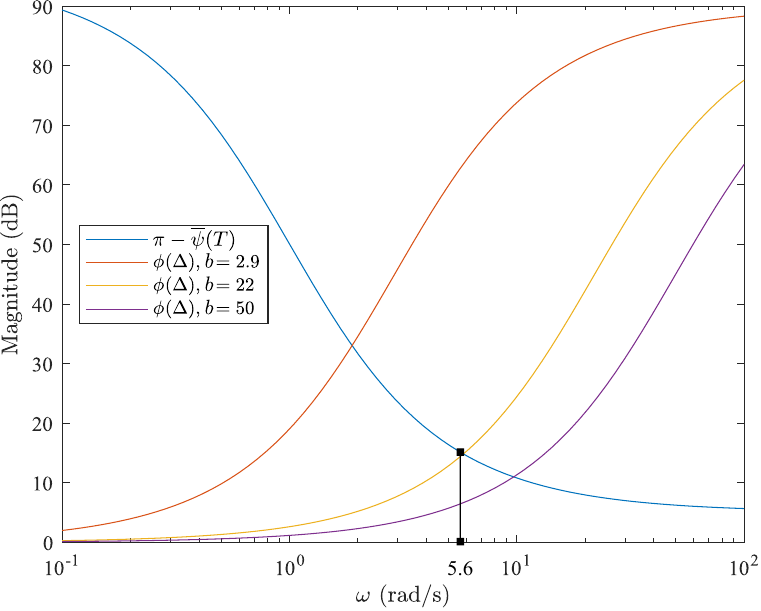}
	\caption{Structured phase radius plot for $T$ and phase radius plots for $\Delta$.}
	\label{fig:phi_analysis}
\end{figure}
The upper bound of the structured phase radius $\overline\psi_\chi$ of $T(j\omega)$ was calculated using the LMI methods detailed in Algorithm \ref{alg1} and plotted alongside the phase radii of $\Delta(j\omega)$. The lower bound $\underline\psi_\chi$ was also calculated as described in Section \ref{sec:spi}, but was found to be equal to $\overline\psi_\chi$ to within 0.1\%, and as such was not plotted. Figure \ref{fig:phi_analysis} shows that for $b\geq 22$, the phase criterion \eqref{eq:spr:system} is satisfied for $\omega<5.6$ rad/s. Therefore, for $b\geq 22$, either the phase or gain criterion is satisfied for all frequencies, guaranteeing the stability of the system.

The root locus plot indicates that the system is stable for perturbations with values of $b>2.9$. The combination of structured singular value analysis and relative passivity analysis could not identify any part of this region of stability, while combining structured singular value analysis with structured phase analysis correctly identifies the portion of the stable region where $b>22$. This demonstrates that there exist  admissible  perturbations that can be identified by  a  combination of  gain  and  phase analysis methods  that could not  have been  identified by combining  gain  and passivity methods.

\section{Conclusion}
We  investigated  the potential  for some  newly developed  notions  of  matrix  phase to provide novel  phase-based  robust stability criteria for stable LTI plants that are subject to  structured perturbations. The approach resembles the well-known $\mu$-analysis  methods that
use matrix  singular  values. When  combined with conventional singular value stability criteria,  the mixed gain and phase-based  methods can  provide  enhanced criteria for determining robust stability  margins. An  example demonstrated that the  novel    criteria were able to determine robust stability criteria for a  class of  perturbations whose stability  could  not  be determined via either a  conventional  $\mu$-analysis,   nor passivity-based methods. Future work could consider the associated synthesis problem, for example optimising structured phase robustness at low frequencies and gain robustness at high frequencies.

\section{Appendix}
\subsection{Derivative of an Eigenvalue}
\begin{theorem}[Theorem 2, \cite{Magnus1985}] \label{thm:maths:d_eig}
	Suppose that $A_0\in\C^{n\times n}$ has a simple eigenvalue $\lambda_0\in\C$ with an associated eigenvalue $u_0\in\C^n$, so that:
	\begin{align}
		A_0u_0 = \lambda_0 u_0.
	\end{align}
	Then there exist functions $\lambda$ and $u$ defined and continuous on a neighbourhood of $A_0$ satisfying:
	\begin{align}
		\lambda(A_0) &= \lambda_0,\\
		u(A_0) &= u_0,\\
		u_0^* u(A) &= 1,\\
		Au(A) &= \lambda(A) u(A).
	\end{align}
	Furthermore, the functions $\lambda$ and $u$ are differentiable on this neighbourhood, and their differentials are:
	\begin{align}
		d\lambda &= \frac{v_0^* (dA) u_0}{v_0^* u_0},\\
		du &= (\lambda_0 I_n - A_0)^+ \left(I- \frac{u_0v_0^*}{v_0^*u_0}\right)(dZ) u_0,
	\end{align}
	where $v_0$ is a right eigenvector of $A_0$ corresponding to $\lambda_0$, i.e:
	\begin{align}
		v_0^*A_0 = \lambda_0 v_0^*,
	\end{align}
\end{theorem}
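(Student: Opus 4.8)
The plan is to derive both the existence of the differentiable branch and the explicit differential formulas from a single application of the implicit function theorem to the eigenvalue equation, augmented by the normalization $u_0^* u = 1$. First I would encode a normalized eigenpair as a zero of the map $F:\C^n\times\C\times\C^{n\times n}\to\C^n\times\C$,
\begin{align}
	F(u,\lambda,A) := \bmat (A-\lambda I_n)u \\ u_0^* u - 1 \emat,
\end{align}
noting $F(u_0,\lambda_0,A_0)=0$. The functions $u(A),\lambda(A)$ of the theorem are then the locally unique solutions of $F(u,\lambda,A)=0$, which automatically enforces $u_0^* u(A)=1$ and, once the implicit function theorem applies, supplies the asserted continuity and differentiability.

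The crucial preliminary step is to check that the partial Jacobian of $F$ in $(u,\lambda)$ at the base point,
\begin{align}
	J = \bmat A_0 - \lambda_0 I_n & -u_0 \\ u_0^* & 0 \emat,
\end{align}
is nonsingular, since this is precisely the hypothesis the implicit function theorem requires. I would argue this directly: a null vector of $J$ with components $(w,c)\in\C^n\times\C$ satisfies $(A_0-\lambda_0 I_n)w = c\,u_0$ and $u_0^* w = 0$. Left-multiplying the first relation by the left eigenvector $v_0$ and using $v_0^*(A_0-\lambda_0 I_n)=0$ gives $c\,(v_0^* u_0)=0$. Here I would invoke the standard fact that simplicity of $\lambda_0$ forces $v_0^* u_0\neq 0$, whence $c=0$; then $w\in\ker(A_0-\lambda_0 I_n)=\operatorname{span}(u_0)$, and $u_0^* w=0$ forces $w=0$. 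Thus $J$ is invertible and the differentiable branch exists.

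To obtain the differentials I would differentiate the identity $A\,u(A)=\lambda(A)\,u(A)$ at $A_0$, which yields
\begin{align}
	(A_0-\lambda_0 I_n)\,du = (d\lambda)\,u_0 - (dA)\,u_0. \label{eq:prop:star}
\end{align}
Left-multiplying \eqref{eq:prop:star} by $v_0^*$ annihilates the left-hand side and isolates $d\lambda = v_0^*(dA)u_0/(v_0^* u_0)$. Substituting this back rewrites the right-hand side of \eqref{eq:prop:star} as $-\bigl(I_n - u_0 v_0^*/(v_0^* u_0)\bigr)(dA)u_0$, so that
\begin{align}
	(\lambda_0 I_n - A_0)\,du = \Bigl(I_n - \tfrac{u_0 v_0^*}{v_0^* u_0}\Bigr)(dA)\,u_0.
\end{align}

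The main obstacle, and the only genuinely delicate point, is justifying that applying $(\lambda_0 I_n-A_0)^+$ to this last identity returns $du$ exactly rather than up to a kernel component. I would close this as follows. The right-hand side is annihilated by $v_0^*$, so it lies in $\operatorname{range}(\lambda_0 I_n-A_0)=\{x:v_0^* x=0\}$ and the pseudoinverse delivers a genuine solution of the equation. Moreover $\operatorname{range}\bigl((\lambda_0 I_n-A_0)^+\bigr)=\operatorname{span}(u_0)^\perp$, so $(\lambda_0 I_n-A_0)^+(\cdots)$ is orthogonal to $u_0$; since differentiating $u_0^* u=1$ gives $u_0^* du=0$, this particular solution is exactly the sought $du$, yielding the stated formula. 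Throughout, the simplicity of $\lambda_0$—which simultaneously guarantees $v_0^* u_0\neq 0$ and the one-dimensionality of both the left and right eigenspaces—is the single structural hypothesis doing the work.
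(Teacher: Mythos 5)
The paper never proves this statement: it is imported verbatim as Theorem~2 of \cite{Magnus1985} and used purely as a black box (to justify the gradient formula \eqref{eq:grad_partials}), so there is no internal proof to compare yours against. Your blind proof is correct and complete, and it is in substance the classical argument behind the cited result. All three steps check out: (i) the bordered Jacobian $J=\bsmat A_0-\lambda_0 I_n & -u_0 \\ u_0^* & 0 \esmat$ is nonsingular, where simplicity of $\lambda_0$ is used exactly twice --- once to get $v_0^*u_0\neq 0$ and once to get $\ker(A_0-\lambda_0 I_n)=\operatorname{span}(u_0)$; (ii) differentiating $Au(A)=\lambda(A)u(A)$ and multiplying by $v_0^*$ isolates $d\lambda$; (iii) the genuinely delicate pseudoinverse step is closed correctly: the right-hand side is annihilated by $v_0^*$, hence lies in $\operatorname{range}(\lambda_0 I_n-A_0)$ so the pseudoinverse produces an exact solution, and the pair of facts $\operatorname{range}\bigl((\lambda_0 I_n-A_0)^+\bigr)=\operatorname{span}(u_0)^\perp$ and $u_0^*\,du=0$ (from differentiating the normalization) kills the kernel ambiguity and identifies that solution with $du$. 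Two points are worth making explicit rather than tacit: your base point satisfies $F(u_0,\lambda_0,A_0)=0$ only because the theorem's conditions $u(A_0)=u_0$ and $u_0^*u(A)=1$ implicitly force $u_0^*u_0=1$, and since $F$ is polynomial in $(u,\lambda,A)$ the holomorphic implicit function theorem applies, which is what delivers complex differentiability. Finally, note that you silently corrected three typos in the paper's statement: $dZ$ should read $dA$, ``associated eigenvalue $u_0$'' should read eigenvector, and $v_0$ with $v_0^*A_0=\lambda_0 v_0^*$ is a \emph{left} eigenvector, not a right one --- your proof treats all three correctly.
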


\subsection{Integral Quadratic Constraints}

\begin{definition}
	Given a bounded self-adjoint operator $\Pi\in\RLi^{2n\times 2n}$, we say that $\Delta:\L_{2e}^n\rightarrow\L_{2e}^n$ satisfies the \emph{IQC (integral quadratic constraint)} defined by $\Pi$ if:
	\begin{align} 
		\int_{-\infty}^\infty \bmat \hat{v}(j\omega) \\ \widehat{\Delta(v)}(j\omega) \emat^*  	\Pi(j\omega) \bmat \hat{v}(j\omega) \\ \widehat{\Delta(v)}(j\omega)\emat d\omega \geq 0  
		 \label{eq:IQC}
	\end{align}
\end{definition}
\vspace{.2cm}

\begin{theorem}{(IQC Stability \cite{Veenman2016})}\label{theorem:IQC}
	Consider the interconnection of ($G$, $\Delta$) given in Figure \ref{fig:iqc_feedback}. Assume:
	\begin{enumerate}
		\item[(I)] For every $\tau\in[0,1]$, the interconnection of $G$ and $-\tau\Delta$ is well-posed.
		\item[(II)] For every $\tau\in[0,1]$, the IQC defined by $\Pi$ \eqref{eq:IQC} is satisfied by $-\tau\Delta$.
		\item[(III)] There exists $\epsilon>0$ such that:
		\begin{align}
			\bmat G(j\omega) \\ I \emat^* \Pi(j\omega) \bmat G(j\omega) \\ I \emat \leq -\epsilon I,\: \forall \omega\in\R.
		\end{align}
	\end{enumerate}
	Then the feedback interconnection of $(G,\Delta)$ is stable.
\end{theorem}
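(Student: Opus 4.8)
The plan is to prove stability by a homotopy in the scalar parameter $\tau$, reducing the statement to a single uniform a priori bound on the internal signals. Consider the family of interconnections $(G,-\tau\Delta)$ for $\tau\in[0,1]$. By assumption (I) each member is well-posed, so the induced map $(u_1,u_2)\mapsto(e_1,e_2)$ is a well-defined causal operator on $\L_{2e}^n$; the endpoint $\tau=0$ gives $e_1=u_1$, $e_2=Ge_1+u_2$, which is stable because $G\in\RHi^{n\times n}$. The objective is to transport this stability along $\tau$ to the endpoint $\tau=1$, at which the loop is exactly $(G,\Delta)$.

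First I would establish the central estimate: there is a constant $c>0$, independent of $\tau$, such that every solution of the $\tau$-loop whose internal signals happen to lie in $\L_2^n$ obeys $\|e_1\|+\|e_2\|\le c(\|u_1\|+\|u_2\|)$. To obtain it, write $v$ for the signal entering the perturbation and $w=-\tau\Delta v$ for its output, and apply Parseval's identity to the frequency-domain quadratic form $\int_{-\infty}^\infty [\hat v;\hat w]^*\Pi(j\omega)[\hat v;\hat w]\,d\omega$. Assumption (II) applied to $-\tau\Delta$ shows this form is nonnegative. Substituting the remaining loop relation linking $v$, $w$ and the inputs, and invoking the strict frequency-domain inequality (III), shows that the same form is bounded above by $-\epsilon'(\|v\|^2+\|w\|^2)$ plus cross terms linear in the inputs, for some $\epsilon'>0$. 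Combining the lower and upper bounds and completing the square yields $\|v\|^2+\|w\|^2\le c'(\|u_1\|+\|u_2\|)^2$, whence the bound on $(e_1,e_2)$ follows from the boundedness of $G$ and the loop equations.

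Next I would run the homotopy continuation. Let $S=\{\tau\in[0,1]:(G,-\tau\Delta)\text{ is stable}\}$. It is nonempty since $0\in S$. It is open: if $\tau_0\in S$ then the loop operator at $\tau_0$ is boundedly invertible, and a small-gain perturbation, using the boundedness supplied by well-posedness, keeps it invertible for nearby $\tau$. It is closed: the uniform a priori bound of the previous step prevents the induced gain from diverging as $\tau$ approaches a limit point of $S$, so the limiting loop remains stable. Since $[0,1]$ is connected, $S=[0,1]$, and in particular $1\in S$, which is the desired conclusion.

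The main obstacle is the gap between the conditional a priori bound---which assumes at the outset that the internal signals are in $\L_2^n$---and genuine $\L_2$-stability, since well-posedness only guarantees solutions in $\L_{2e}^n$. Bridging this gap is exactly what the homotopy and the truncation operator $P_T$ accomplish: one applies the estimate to truncated signals $P_T e_i$, uses causality of $G$ and $\Delta$ to control the error terms introduced by truncation, and lets $T\to\infty$ so that a uniform-in-$T$ bound promotes the extended-space solution to a genuine $\L_2^n$ solution. A secondary technical point is that the IQC (II) is posed over the whole imaginary axis while the loop signals are one-sided; this ``soft IQC'' feature is handled through the causality of $\Delta$ together with a factorization of the bounded self-adjoint multiplier $\Pi$, ensuring the frequency-domain form genuinely controls the one-sided time-domain energy.
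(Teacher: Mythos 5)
The paper offers no proof of this theorem to compare against: it is quoted directly from the IQC literature \cite{Veenman2016}, with the relaxed frequency-domain inequality in the subsequent remark drawn from \cite{Megretski1997}. Your sketch is, in outline, precisely the original Megretski--Rantzer homotopy argument that underlies the cited result, and its three pillars are in the right place: the conditional a priori estimate obtained by sandwiching the Parseval quadratic form between the IQC lower bound (II) and the strict upper bound (III) and completing the square; the continuation in $\tau$ from the trivially stable loop at $\tau=0$; and the truncation/causality device for promoting $\L_{2e}^n$ solutions to $\L_2^n$ ones.

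Two details deserve correction. First, openness is not obtained ``using the boundedness supplied by well-posedness'': well-posedness yields only existence and uniqueness in $\L_{2e}^n$, with no quantitative bound. The small-gain perturbation step needs the gain bound $c$ from your a priori estimate at the stable parameter $\tau_0$, together with the finiteness of the induced gain of $\Delta$ (which holds since $\Delta\in\RHi^{n\times n}$); and because $c$ is uniform in $\tau$, the admissible step size $\delta$ is uniform, so one marches from $\tau=0$ to $\tau=1$ in finitely many steps of length $\delta$. This uniform-step argument replaces, and is cleaner than, your hand-waved closedness claim, which as stated assumes what it must prove (that the limiting loop inherits stability rather than merely a bounded gain on $\L_2$ solutions, should any exist). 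Second, your closing remark about a factorization of $\Pi$ conflates two distinct proof strategies: in the soft-IQC homotopy proof the one-sidedness issue never arises, because the quadratic form is only evaluated on genuine $\L_2^n$ signals at parameter values already known to be stable, where Plancherel's theorem applies directly to one-sided signals; a $J$-spectral factorization of $\Pi$ is instead the route to \emph{hard} (time-domain, truncation-robust) IQCs, which is the alternative proof strategy surveyed in \cite{Veenman2016}. Neither slip is fatal; with the openness step repaired as indicated, your sketch is a faithful reconstruction of the standard proof of the cited theorem.
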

The negative sign applied to $\Delta$ accounts for the negative feedback connection in Figure \ref{fig:iqc_feedback}.

\begin{remark}
	Suppose that the multiplier $\Pi$ is partitioned into $n\times n$ blocks:
	\begin{align}
		\Pi = \bmat \Pi_{11} & \Pi_{12} \\ \Pi_{12}^* & \Pi_{22} \emat.
	\end{align}
	Then, if $\Pi_{11}\geq 0$ and $\Pi_{22}\leq 0$, the IQC defined by $\Pi$ is satisfied by $\tau\Delta$ for $\tau\in[0,1]$ if and only if it is satisfied by $\Delta$ \cite{Veenman2016}. This simplifies condition (II) of Theorem \ref{theorem:IQC}.
\end{remark}

\begin{remark}
	The frequency domain inequality (III) may be replaced with the following \cite{Megretski1997}:
	\begin{align}
		\bmat G(j\omega) \\ I \emat^* \Pi(j\omega) \bmat G(j\omega) \\ I \emat \leq -\epsilon G(j\omega)^*G(j\omega),\: \forall \omega\in\R.
	\end{align}
\end{remark}

\subsection{Relative Passivity Analysis}
Consider the feedback loop interconnection of Figure \ref{fig:iqc_feedback}. A well-known loop transformation allows the plant and perturbation to be replaced with their ``scattering matrix'' equivalents, defined as follows:
\begin{align}
	S_G &= (I-G)(I+G)^{-1},\\
	S_\Delta &= (I-\Delta)(I+\Delta)^{-1}.
\end{align}
The $\mathcal{L}_\infty$ gain of such a ``scattering'' system is no greater than $1$ if and only if $S_G$ is passive \cite{Desoer1975}. For linear systems and perturbations, this permits a frequency domain stability criterion based on the largest singular value of the frequency response of the plant and perturbation. If $\Delta$ comes from the structured set $\mathbf{\Delta}_\chi$, we may use the structured singular value of $S_G$, which we will denote:
\begin{align}
	R(G(j\omega)) = \mu_\chi(S_G(j\omega)).
\end{align}
This leads to a combined structured gain/passivity stability criterion, namely, that if the frequency response of $G$ and $\Delta$ satisfy either one of:
\begin{align} \label{eq:appendix:R_criterion}
	\mu_\chi(G(j\omega)) \cdot \| \Delta(j\omega) \| < 1 \text{ or,}\\
	R(G(j\omega)) \cdot \| S_\Delta(j\omega)\| < 1,
\end{align}
for all $\omega\geq 0$, the interconnection $(G,\Delta)$ is stable.
\printbibliography

@Book{Horn1991,
  author    = {R.A. Horn and C.R. Johnson},
  publisher = {Cambridge University Press},
  title     = {Topics in Matrix Analysis},
  year      = {1991},
}

@Article{Megretski1997,
  author  = {A. Megretski and A. Rantzer},
  journal = {IEEE Transactions on Automatic Control},
  title   = {System Analysis via Integral Quadratic Constraints},
  year    = {1997},
}

@article{Woolcock2023,
	title={Mixed gain/phase robustness criterion for structured perturbations with an application to power system stability},
	author={L. Woolcock and R. Schmid},
	journal={IEEE Control Systems Letters},
	year={2023},
	publisher={IEEE}
}

@Article{Woolcock2024,
  author     = {L. Woolcock and R. Schmid},
  journal    = {12th IFAC Symposium on Control of Power and Energy Systems},
  title      = {{Performance Comparison of MIMO Gain/Phase Analysis Methods for Sub-Synchronous Instability}},
  year       = {2024},
  pages      = {515--520},
  conference = {12th IFAC Symposium on Control of Power and Energy Systems, Rabat, Morocco},
  publisher  = {IFAC PapersOnLine 58-13},
}

@Article{Veenman2016,
  author  = {J.  Veenman and  C.W. Scherer  and H. K{o}roglu},
  journal = {European Journal of Control},
  title   = {Robust stability and performance analysis based on integral quadratic constraints},
  year    = {2016},
 pages     = {1--32},
}

@Article{WCKQ2020,
  author  = {D. Wang and W. Chen and S.Z. Khong and L. Qiu},
  journal = {Linear Algebra and Its Applications},
  title   = {{On the phases of a complex matrix}},
  year    = {2020},
  pages   = {152--179},
  volume  = {593},
}

@Article{Postle1981,
	author    = {I. Postlethwaite  and J. Edmunds and A. MacFarlane},
	journal   = {IEEE Transactions on Automatic Control},
	title     = {Principal gains and principal phases in the analysis of linear multivariable feedback systems},
	year      = {1981},
	pages     = {32--46},
	volume    = {26}, 
	date      = {1981},
	issue     = {1},
	publisher = {IEEE},
}

@article{Chen2024,
	author = {W. Chen and   D. Wang and  S.Z. Khong  and L. Qiu},
	title = {A Phase Theory of Multi-Input Multi-Output Linear Time-Invariant Systems},
	journal = {SIAM Journal on Control and Optimization},
	volume = {62},
	number = {2},
	pages = {1235--1260},
	year = {2024},
      }

@Book{zhou1999,
	author    = {Zhou, K.},
	editor    = {Robbins, Tom},
	publisher = {Prentice-Hall},
	title     = {Essentials of Robust Control},
	year      = {1999},
	address   = {Upper Saddle River, NJ, USA},
	}

@Book{Desoer1975,
	author={C. Desoer and M. Vidyasagar},	
	publisher = {Academic},
	title={Feedback Systems: Input-Output Properties}, 
	year={1975},
	}

@ARTICLE{Doyle1982,
		author = {J. Doyle},
		title = {Analysis of feedback systems with structured uncertainties},
		journal = {IEE Proceedings D (Control Theory and Applications)},
		issue = {6},   
		volume = {129},
		year = {1982},
		month = {11},
		pages = {242--250(8)},
		copyright = {The Institution of Electrical Engineers},
			}

@Book{zhou1996,
	author    = {K. Zhou and  J. Doyle  and K. Glover},
	editor    = {Robbins, Tom},
	publisher = {Prentice-Hall},
	title     = {{Robust and Optimal Control}},
	year      = {1996},
	address   = {Upper Saddle River, NJ, USA},
}

@Article{Wang2024,
  author  = {W. Chen and D. Wang and L. Qiu},
  journal = {IEEE/CAA Journal of Automatica Sinica},
  title   = {{The First Five Years of a Phase Theory for Complex Systems and Networks}},
  year    = {2024},
  number  = {JAS-2024-0220},
  pages   = {1728--1743},
  volume  = {11},
}

@Article{Wang2024b,
  author  = {D. Wang and W. Chen and L. Qiu},
  journal = {Automatica},
  title   = {{Synchronization of diverse agents via phase analysis}},
  year    = {2024},
  pages   = {p. 111325},
  volume  = {159},
}

@Article{Liang2024,
  author  = {J. Liang and D. Zhao and L. Qiu},
  journal = {ArXiv},
  title   = {{Feedback Stability Under Mixed Gain and Phase Uncertainty}},
  year    = {2024},
}

@Article{Chen2022,
  author    = {Y. Chen and S. Pan and M. Huang and H. Wang and H. Wang and X. Zha},
  journal   = {IEEE Transactions on Power Electronics},
  title     = {{Robust Stability Assessment of Single-Phase Inverter With Multiparameter Distributions}},
  year      = {2022},
  pages     = {6062--6073},
  volume    = {37},
  issue     = {5},
  publisher = {IEEE},
}

@Article{Wang2017,
	author    = {Y. Wang and Q. Wu and W. Gong and M.P.S. Gryning},
	journal   = {IEEE Transactions on Sustainable Energy},
	title     = {$H^\infty$ Robust Current Control for DFIG-Based Wind
	Turbine Subject to Grid Voltage Distortions},
	year      = {2017},
	pages     = {816--825},
	volume    = {8}, 
	issue     = {2},
	publisher = {IEEE},
}

@Article{Pan2018,
	author    = {Y. Pan and F. Liu and L. Chen and J. Wang and F. Qiu and C. Shen and S. Mei},
	journal   = {IEEE Transactions on Power Systems},
	title     = {Towards the Robust Small-Signal Stability Region
	of Power Systems Under Perturbations Such as
	Uncertain and Volatile Wind Generation},
	year      = {2018},
	pages     = {1790--1799},
	volume    = {33}, 
	issue     = {2},
	publisher = {IEEE},
}

@Article{Tits1999,
  author  = {A. L. Tits and V. Balakrishnan and L. Lee},
  journal = {IEEE Transactions on Automatic Control},
  title   = {{Robustness Under Bounded Uncertainty with Phase Information}},
  year    = {1999},
  month   = jan,
  number  = {1},
  pages   = {50--65},
  volume  = {44},
}

@Book{Bao2007,
  author    = {J. Bao and P.L. Lee},
  publisher = {Springer},
  title     = {{Process Control: The Passive Systems Approach}},
  year      = {2007},
}

@Article{Srazhidinov2023,
  author  = {R. Srazhidinov and  D.  Zhang and L.  Qiu},
  journal = {Automatica},
  title   = {{Computation of the phase and gain margins of MIMO control systems}},
  year    = {2023},
}

@Article{Pak2020,
  author  = {J.M. Pak and S.H. Hong},
  journal = {International Journal of Robust Nonlinear Control},
  title   = {{Gain-phase information based robust control theory}},
  year    = {2020},
}

@Article{Chellaboina2008,
  author  = {V. Chellaboina and W. M. Haddad and A. Kamath},
  journal = {International Journal of Control},
  title   = {{The structured phase margin for robust stability analysis of linear systems with phase and time delay uncertainties}},
  year    = {2008},
}

@Article{BarOn1990,
  author  = {J.R. Bar-on and E.A. Jonckheere},
  journal = {International Journal of Control},
  title   = {{Phase margins for multivariable control systems}},
  year    = {1990},
}

@Book{Freudenberg1988,
  author    = {J.S. Freudenberg and D.P.  Looze},
  publisher = {Springer Berlin, Heidelberg},
  title     = {{Frequency Domain Properties of Scalar and Multivariable Systems}},
  year      = {1988},
}

@Article{Chen1998,
  author  = {J. Chen},
  journal = {IEEE Transactions on Automatic Control},
  title   = {{Multivariable Gain-Phase and Sensitivity Integral Relations and Design Tradeoffs}},
  year    = {1998},
}

@Article{Wang2008,
  author  = {Q.G. Wang and Yong He and Zhen Ye and Chong Lin and Chang Chieh Hang},
  journal = {Journal of Process Control},
  title   = {{On loop phase margins of multivariable control systems}},
  year    = {2008},
}

@Article{Magnus1985,
  author  = {Jan R. Magnus},
  journal = {Econometric Theory},
  title   = {{On Differentiating Eigenvalues and Eigenvectors}},
  year    = {1985},
  month   = aug,
  number  = {2},
  pages   = {179--191},
  volume  = {1},
}
%
%
%
%
%
%
%
%

\end{document}